\newcommand{\allignLabel}[1]{\refstepcounter{equation}(\theequation)\def\tmplab{#1}\ltx@label\tmplab}
\newcommand\Label[1]{\refstepcounter{equation}(\theequation)\ltx@label{#1}}
\def\bR{\begin{color}{red}}
\def\bB{\begin{color}{blue}}
\def\bM{\begin{color}{magenta}}
\def\bC{\begin{color}{cyan}}
\def\bW{\begin{color}{white}}
\def\bBl{\begin{color}{black}}
\def\bG{\begin{color}{green}}
\def\bY{\begin{color}{yellow}}
\def\e{\end{color}\xspace}
\newcommand{\bit}{\begin{itemize}}
\newcommand{\eit}{\end{itemize}\par\noindent}
\newcommand{\ben}{\begin{enumerate}}
\newcommand{\een}{\end{enumerate}\par\noindent}
\newcommand{\beq}{\begin{equation}}
\newcommand{\eeq}{\end{equation}\par\noindent}
\newcommand{\beqa}{\begin{eqnarray*}}
\newcommand{\eeqa}{\end{eqnarray*}\par\noindent}
\newcommand{\beqn}{\begin{eqnarray}}
\newcommand{\eeqn}{\end{eqnarray}\par\noindent}
\newcommand{\ie}{\emph{i.e.}}
\newcommand{\eg}{\emph{e.g.}}
\newcommand{\alt}{~\mid~}
\newcommand{\inl}[1]{\ensuremath{\mathtt{inj}_1{\;#1}}}
\newcommand{\inr}[1]{\ensuremath{\mathtt{inj}_2{\;#1}}}
\newcommand{\ini}[1]{\mathtt{inj}_i{\;#1}}
\newcommand{\pv}[2]{\ensuremath{\langle #1,#2 \rangle}}
\newcommand{\one}{\mathtt{I}}
\newcommand{\qubit}{\mathtt{qubit}}
\newcommand{\two}{\mathtt{2}}
\newcommand{\fold}[1]{\ensuremath{\mathtt{fold}{\;#1}}}
\newcommand{\letv}[3]{{\mathtt{let}}\,{#1}={#2}~{\mathtt{in}}~{#3}}
\newcommand{\ffix}{\ensuremath{\mathtt{fix}~}}
\newcommand{\ffixl}{\ensuremath{\mathtt{fix}^\lambda~}}
\newcommand{\entailiso}{\vdash_{\isoterm}}
\newcommand{\iso}{\ensuremath{\leftrightarrow}}
\newcommand{\isoterm}{\ensuremath{\omega}}
\newcommand{\nnext}[1]{\mathtt{next}{\;\!#1}}
\newcommand{\later}{{\ensuremath{\blacktriangleright}}}
\newcommand{\id}{\mathrm{id}}
\newcommand{\N}{{\mathbb N}}
\newcommand{\CC}{{\mathbf C}}
\newcommand{\NNe}{(\ensuremath{{{\mathbf C_{\mathrm{e}}})^\infty}}}
\newcommand{\NN}{{\ensuremath{{\mathbf C}^\infty_{\mathrm{e}}}}}
\newcommand{\QQ}{{\ensuremath{{\mathbf C}^\infty}}}
\newcommand{\QQdag}{{\ensuremath{{\mathbf C}^\infty_{\dagger}}}}
\newcommand{\Scat}{{\mathbf S}}
\newcommand{\defeq}{\stackrel{\textrm{{\scriptsize def}}}{=}}
\newcommand{\Set}{\ensuremath{\mathbf{Set}}}
\newcommand{\Hilb}{\mathbf{Hilb}}
\newcommand{\PInj}{\mathbf{PInj}}
\newcommand{\inim}{\textnormal{\textexclamdown}}
\newcommand\embed\hookrightarrow
\newcommand\natto\Rightarrow
\newcommand{\abs}[1]{\left\vert #1 \right\vert}
\newcommand{\dg}{^{\dagger}}
\newcommand{\inv}{^{-1}}
\newcommand{\op}{^{\mathrm{op}}}
\newcommand{\fix}{\mathrm{fix}}
\newcommand{\eval}{\mathrm{eval}}
\newcommand{\iid}{\mathrm{id}}
\newcommand{\comp}{\mathrm{comp}}
\newcommand{\nnoma}{\!^\noma}
\newcommand{\set}[1]{\ensuremath{\{#1\}}}
\newcommand{\den}[1]{\left\llbracket #1 \right\rrbracket}
\newcommand\sem\den
\newcommand{\ov}[1]{\overrightarrow{#1}}
\newcommand{\ket}[1]{\ensuremath{\left|  #1 \right\rangle}}
\newcommand{\yo}{\text{\usefont{U}{min}{m}{n}\symbol{'210}}}
\newcommand{\noma}{\text{\usefont{U}{min}{m}{n}\symbol{'005}}\!}
\DeclareFontFamily{U}{min}{}
\DeclareFontShape{U}{min}{m}{n}{<-> udmj30}{}
\newcommand{\secref}[1]{\S \ref{#1}}
\begin{document}
\begin{frontmatter}
	\title{Non-cartesian guarded recursion with daggers}
	\author{Louis Lemonnier}	
	\address{University of Edinburgh, UK}

	\begin{abstract}
		Guarded recursion is a framework allowing for a formalisation of streams in
		classical (as opposed to concurrent, probabilistic, quantum) programming
		languages. The latter take their semantics in cartesian closed categories.
		However, some programming paradigms do not take their semantics in a
		cartesian setting; this is the case for concurrency, reversible and quantum
		programming for example. In this paper, we focus on reversible programming
		through its categorical model in dagger categories, which are categories
		that contain an involutive operator on morphisms. We show how to introduce
		the framework of guarded recursion into dagger categories with sufficient
		structure for data types, also called dagger rig categories. First, given
		an arbitrary category, we build another category shown to be suitable for
		guarded recursion in multiple ways, via enrichment and fixed point
		theorems. We then study the interaction between this construction and the
		structure of dagger rig categories, aiming for reversible programming.
		Finally, we show that our construction is suitable as a model of
		higher-order reversible programming languages, such as symmetric
		pattern matching, to which we add guarded recursion features.
	\end{abstract}
	\begin{keyword}
		Guarded recursion, reversible programming, categorical semantics.
	\end{keyword}
\end{frontmatter}

\section{Introduction}

Most programming language take a sound and adequate interpretation in
\emph{cartesian closed} categories. However, cartesian closed categories do not
model all computing paradigms: reversible computation takes a sound and
adequate model \cite{kaarsgaard2021join,nous2021invcat,nous2024invrec} in
\emph{inverse categories} \cite{kastl1979inverse,kaarsgaard2017join}. The
category of sets and partial injections is a concrete such category, and is not
cartesian closed. Another field of interest where the categories are not
cartesian closed is \emph{quantum} computing: \eg~the category of
\emph{completely positive maps} \cite{selinger2007cpm}, which is \emph{compact}
closed, but not cartesian, followed by many others \cite{selinger2008fully,hasuo2011goi,pagani2014quantitative,clairambault2019game,jia2022variational,clairambault2019full,tsukada2024enriched}. The categories mentioned before are
models of mixed quantum computing. The ones that integrate pure quantum
computing are usually based on Hilbert spaces and bounded linear maps
\cite{sabry2018symmetric,heunen2022information}.

Among the categories mentioned above, inverse categories and categories based
on Hilbert spaces are \emph{dagger} categories. The dagger is an involutive
operation on morphisms. This means that dagger categories possess a sort of
partial inverse structure; for example, the partial inverse of a partial
function $f \colon \{ 0,1 \} \to \{ 0,1 \}$ (see on the left below) is also a
partial function $f\dg \colon \{ 0,1 \} \to \{ 0,1 \}$ (see on the right
below).
\begin{equation}
	\label{eq:pinj-fun}
	f(x) = \left\{
		\begin{array}{ll}
			1 & \text{if } x=0 \\
			\text{undefined} & \text{otherwise}
		\end{array}
		\right. \qquad
	f\dg (x) = \left\{
		\begin{array}{ll}
			0 & \text{if } x=1 \\
			\text{undefined} & \text{otherwise}
		\end{array}
		\right.
\end{equation}
Moreover, this partial inverse allows us to provide an interpretation to
functions even if the category is not closed \cite{kaarsgaard2021join,nous2021invcat,nous2024invrec}. A dagger category equipped with sufficient
structure to interpret basic data types ($\otimes$ for pairs, $\oplus$ for
control) is called a \emph{dagger rig} category. Dagger rig categories are
believed to be the canonical model of reversible
computation~\cite{carette2024compositional}.

While the notions of inductive types and recursion are well-studied in
cartesian closed categories~\cite{abramsky95domain,fiore-phd}, this is not the
case for dagger categories. Some families of dagger categories, such as inverse
categories, are suitable to model inductive types and recursion, through their
enrichment~\cite{kaarsgaard2017join} in directed complete partial orders -- a
cartesian closed structure capable of solving recursive domain
equations~\cite{domain-theory}. Inverse categories being in general not closed,
the notion of enrichment is central: terms and morphisms typing judgements are
interpreted in the enrichment and not in the externalisation, \emph{i.e.}~the
underlying inverse category. The enrichment in dcpos provides a fixed point
operator, therefore providing a model of reversible programming languages which
are Turing complete~\cite{nous2024invrec}. However, the story does not go so
well for Hilbert spaces, which lack the proper enrichment~\cite[Proposition
2.10]{heunen2013l2}. Even if inductive types can be interpreted in Hilbert
spaces~\cite[Theorem 3.2]{barr92compact}, it is an open question whether they
can interpret recursion. We choose not to tackle this open question in this
paper, and we instead suggest a different path, through guarded recursion.

Guarded recursion allows us to capture and control recursive calls within the
type system~\cite{nakano2000recursion}, with the help of a modality, written
$\later$ and called \emph{later}. While this framework has been extensively
studied for classical computation (as opposed to \eg~reversible or quantum
computation) in cartesian closed settings, it is yet to be applied to less
traditional ways of computation.

In the search for a denotational semantics of guarded recursion, a first
account of solution to solve guarded domain equations was given within
sheaves~\cite{gianantonio2003unifying,gianantonio2004unifying}. Then an
adequate denotational semantics was given with ultrametric
spaces~\cite{birkedal2010metric}, and later the same authors provided a more
general semantics within the \emph{topos of trees}~\cite{birkedal2012first}.
The topos of trees is a category that is cartesian closed, which means in
particular that it is a model of the simply-typed $\lambda$-calculus and of
most traditional programming languages.

Several papers have then used synthetic guarded domain theory to develop
refined models of guarded types or guarded recursion \cite{birkedal2016guarded,mannaa2020ticking}. Guarded recursion has also been generalised to the case of
causal structures~\cite{basold2023causal} in a cartesian closed setting.

In the literature, guarded traced categories~\cite{goncharov2018guarded} are an
instance of non-cartesian study of guarded recursion. The construction in this
paper does not necessary yield a guarded trace category, offering a different
angle in generalising guarded recursion.

The goal of this paper is to show that guarded recursion has a use beyond the
formalisation of streams in classical programming languages, in particular for
reversibility, through dagger categories. The study of the dagger structure is
central in forming a link between the semantics of reversible programming
languages and the semantics of guarded recursion in the topos of trees.
Starting from an arbitrary category, we show that we can construct a guarded
model out of this category, enriched in the topos of trees, therefore allowing
for a model of guarded fixed points. Aiming at reversible programming, we study
the interaction between the \emph{dagger rig} structure and our guarded
construction, in order to use the latter as interpretation for a reversible
programming language with guarded recursion.


\emph{Content of the paper.}
The paper starts with a brief summary of classical guarded recursion. We
picture the corresponding syntax (see~\secref{sub:classic-syntax}) through the
guarded $\lambda$-calculus~\cite{birkedal2016guarded}, and then we introduce
its interpretation in the topos of trees (see \secref{sub:topos}). We then
present the \emph{guarded construction} (see \secref{sec:construction}):
starting from an arbitrary category with a terminal object, we construct a
categorical model of guarded recursion. We show that the constructed category
has characteristics similar to the topos of trees (see
\secref{sub:dagger-guarded}), and that it is even a model of guarded fixed
points (see \secref{sub:guarded-fixed}), allowing for the interpretation of
guarded inductive types. In the following section, we focus on categorical
models of reversible programming. We recall the definition of a dagger rig
category (see \secref{sub:dagger-rig}). We show that the guarded construction
preserves the rig structure (see \secref{sub:rig-construction}), but we need to
refine the construction to study its potential dagger structure (see
\secref{sub:dagger-guarded-dagger}). Finally, we show that the guarded
construction from a dagger rig category is a categorical model of a reversible
programming language with guarded recursion (see \secref{sec:application}).



\section{Classical Guarded Recursion}
\label{sec:classic}

We provide an introduction to the concepts of guarded recursion, first through
the syntax~\cite{birkedal2016guarded} (see \secref{sub:classic-syntax}), and
then through its denotational semantics (see \secref{sub:topos}) in the
\emph{topos of trees}~\cite{birkedal2012first}.

\subsection{Syntactically}
\label{sub:classic-syntax}

The initial goal of guarded recursion \cite{nakano2000recursion} is to ensure
that functions defined on coinductive data types (\eg~streams) are
well-defined. This is done with a modality, called \emph{later}, and recursive
calls can only be nested under the constructor associated to this modality.

We use the symbol $\later$ in the syntax for this modality. Semantically, this
modality is represented by a functor (see \secref{sub:topos}). An excerpt of a
type system is given below, where the statement `$X$ guarded in $A$' means that
$X$ is under a later modality $\later$ in the expression, \eg~in the type $1
\oplus \later X$, which gives the type of guarded natural numbers $\mu X . 1
\oplus \later X$. Given a type $A$, we can form the type of lists as $\mu X .
1 \oplus (A \otimes \later X)$, which we sometimes write $[A]$.
\[
	\infer{\Theta, X \vdash X}{}
	\qquad
	\infer{\Theta \vdash \later A}{\Theta \vdash A}
	\qquad
	\infer{\Theta \vdash \mu X . A}{\Theta, X \vdash A & X \text{ guarded in } A}
\]

This modality must be introduced into the terms in the syntax. In a
$\lambda$-calculus fashion, such as in the guarded $\lambda$-calculus
\cite{birkedal2016guarded}, we write $\nnext\!$ for the constructor introducing
the later modality. Here is an excerpt of the typing rules.
\[
	\infer{
		\Gamma \vdash \nnext M \colon \later A
	}{
		\Gamma \vdash M \colon A
	}
	\qquad
	\infer{
		\Gamma \vdash \ini M \colon A_1 \oplus A_2
	}{
		\Gamma \vdash M \colon A_i
	}
	\qquad
	\infer{
		\Gamma \vdash \fold M \colon \mu X . A
	}{
		\Gamma \vdash M \colon A[\mu X . A / X]
	}
\]

\begin{example}
	\label{ex:lists}
	Let us fix a type $A$. Within the type of lists $[A] = \mu X . 1 \oplus (A
	\otimes \later X)$, the empty list is obtained with $\fold{} (\inl *)$,
	which we will shorten as $[~]$.  The syntax for the list with one element
	$M$ of type $A$ is $\fold{} (\inr{} (M \otimes (\nnext{} \inl *)))$,
	shorten as $M :: \nnext [~]$.
\end{example}

One of the key points of guarded recursion is fixed points. The usual fixed
point operator, such as for a typed $\lambda$-calculus~\cite{plotkin1977lcf,gunter1992semantics,fiore-phd}, has the following type:
\(
	\ffixl \colon (A \to A) \to A.
\)
The corresponding operational rule usually \emph{unfolds} the fixed
point:
\(
	\ffixl M \to M(\ffixl M).
\)

In guarded recursion, the fixed point is more restrictive and recursive calls
are \emph{guarded}. To do so, the type of a guarded fixed point is $\ffix
\colon (\later A \to A) \to A$, and given $\cdot \vdash M \colon \later A
\to A$, that is to say, a function that takes a guarded term as an input,
we have the following operational rule:
\(
	\ffix M \to M(\nnext \ffix M).
\)
This means that after applying the fixed point rule, the next application must
happen one step `in the future'.

Studying the operational behaviour of the terms is an important aspect of
programming language theory, but it is not the focus of this paper. In
particular, there are many interesting notions in guarded type theory that do
not apply in this paper, such as productivity. We restrict ourselves to a
\emph{structural} perspective on programming languages, by analysing their
categorical denotational semantics.

The syntax for guarded recursion, introduced above, admits a sound and adequate
denotational semantics~\cite{birkedal2016guarded} in the \emph{topos of trees},
which is a cartesian closed category with sufficient structure for guarded
recursion, as explained in the next section.

\subsection{Categorically}
\label{sub:topos}

In this section, we recall the main categorical structure underlying guarded
recursion, referred to as \emph{synthetic guarded domain theory}
\cite{birkedal2012first}. We write $\Scat$ for the category $\Set^{\N^{\mathrm
op}}$, referred to as the \emph{topos of trees}. The definition of this
category involves the natural numbers $\N$ with the usual order, which is
pictured below. We also picture its opposite category, $\N^{\mathrm op}$.
\[
	\N ::
	\begin{tikzcd}
		0 & 1 & 2 & 3 & \cdots
		\arrow["\leq", from=1-1, to=1-2]
		\arrow["\leq", from=1-2, to=1-3]
		\arrow["\leq", from=1-3, to=1-4]
		\arrow["\leq", from=1-4, to=1-5]
	\end{tikzcd}
	\quad
	\N^{\mathrm op} ::
	\begin{tikzcd}
		0 & 1 & 2 & 3 & \cdots
		\arrow["\leq"', from=1-2, to=1-1]
		\arrow["\leq"', from=1-3, to=1-2]
		\arrow["\leq"', from=1-4, to=1-3]
		\arrow["\leq"', from=1-5, to=1-4]
	\end{tikzcd}
\]

The notation $\Set^{\N^{\mathrm op}}$ means that the objects of the category
are functors $\N^{\mathrm op} \to \Set$. A functor $X \colon \N^{\mathrm op}
\to \Set$ assigns to every object $n$ of $\N$ (\ie~a natural number) a set
$X(n)$ and to every morphism of $\N$ (such as $n \leq n{+}1$) a function $r^X_n$
between the sets $X(n{+}1)$ and $X(n)$, and therefore $X$ can be pictured in the
category $\Set$:
\[
	\begin{tikzcd}
		X(0) & X(1) & X(2) & X(3) & \cdots
		\arrow["r^X_0"', from=1-2, to=1-1]
		\arrow["r^X_1"', from=1-3, to=1-2]
		\arrow["r^X_2"', from=1-4, to=1-3]
		\arrow[from=1-5, to=1-4]
	\end{tikzcd}
\]

Finally, morphisms in the category $\Scat$ are natural transformations between
the functors $\N^{\mathrm op} \to \Set$. If $X$ and $Y$ are functors
$\N^{\mathrm op} \to \Set$, a natural transformation $f \colon X \to Y$ is
pictured in $\Set$ with the following commutative diagram:
\[
	\begin{tikzcd}
		X(0) & X(1) & X(2) & X(3) & \cdots \\
		Y(0) & Y(1) & Y(2) & Y(3) & \cdots
		\arrow["r^X_0"', from=1-2, to=1-1]
		\arrow["r^X_1"', from=1-3, to=1-2]
		\arrow["r^X_2"', from=1-4, to=1-3]
		\arrow[from=1-5, to=1-4]
		\arrow["r^Y_0", from=2-2, to=2-1]
		\arrow["r^Y_1", from=2-3, to=2-2]
		\arrow["r^Y_2", from=2-4, to=2-3]
		\arrow[from=2-5, to=2-4]
		\arrow["f_0", from=1-1, to=2-1]
		\arrow["f_1", from=1-2, to=2-2]
		\arrow["f_2", from=1-3, to=2-3]
		\arrow["f_3", from=1-4, to=2-4]
	\end{tikzcd}
\]

This category is a \emph{topos}, which means that it contains a lot of
structure. In particular, it is a cartesian category, and the cartesian product
is obtained pointwise and directly inherited from the one in $\Set$. Moreover,
like any topos~\cite[\S I.6, Proposition 1]{maclane2012sheaves}, the category
$\Scat$ is closed.

In particular, the exponential $[X \to Y](-) \colon \N^{\mathrm op} \to \Set$
is given by $\Scat(\yo(-) \times X, Y)$ where $\yo \colon \N^{\mathrm op} \to
\Scat$ is the Yoneda embedding (the use of the Japanese hiragana \emph{``yo''}
was democratised by Loregian \cite{loregian2021coend}). In our case, the Yoneda
embedding can be described as follows: given a natural number $n$, the functor
$\yo(n) \times X$ is the same as $X$ but truncated after the $n^{\text{th}}$
component. It is pictured in $\Set$ as:
\(
	\begin{tikzcd}
		X(0) & \cdots & X(n) & \emptyset & \cdots
		\arrow["r^X_0"', from=1-2, to=1-1]
		\arrow["r^X_{n-1}"', from=1-3, to=1-2]
		\arrow["\inim"', from=1-4, to=1-3]
		\arrow[from=1-5, to=1-4]
	\end{tikzcd}
\)
with $\inim_X \colon \emptyset \to X$ being the unique map from the empty set.
The elements of $\Scat(\yo(n) \times X, Y)$ are thus truncated natural
transformations,
such as:
\begin{equation}
	\label{eq:truncated-nt}
	\begin{tikzcd}[column sep=2em]
		X(0) & \cdots & X(n) & \emptyset & \cdots \\
		Y(0) & \cdots & Y(n) & Y(n{+}1) & \cdots
		\arrow["r^X_0"', from=1-2, to=1-1]
		\arrow["r^X_{n-1}"', from=1-3, to=1-2]
		\arrow["\inim"', from=1-4, to=1-3]
		\arrow[from=1-5, to=1-4]
		\arrow["r^Y_0", from=2-2, to=2-1]
		\arrow["r^Y_{n-1}", from=2-3, to=2-2]
		\arrow["r^Y_{n}", from=2-4, to=2-3]
		\arrow[from=2-5, to=2-4]
		\arrow["f_0", from=1-1, to=2-1]
		\arrow["f_n", from=1-3, to=2-3]
		\arrow["\inim", from=1-4, to=2-4]
	\end{tikzcd}
\end{equation}
and therefore can be described as a finite family of functions between sets
$(f_0, \dots, f_n)$ and the function between sets $r^{[X \to Y]}_n \colon
\Scat(\yo({{n{+}1}}) \times X, Y) \to \Scat(\yo(n) \times X, Y)$ gets a family
of ${n{+}1}$ functions and drops the last one.

Because $\Scat$ is a cartesian closed category, it is a model of the
simply-typed $\lambda$-calculus. In the following, we introduce the structure
related to guarded recursion and show that it allows us to model a
\emph{guarded} fixed point operator.

\begin{definition}[Later functor in $\Scat$ {\cite[\S 2.1]{birkedal2012first}}]
	\label{def:later-s}
	The later functor $L\colon \Scat \to \Scat$ is such that given an object
	$X$ in $\Scat$, we have $LX(0) = 1$ (the terminal object) and $LX(n{+}1) =
	X(n)$; and given a morphism $\alpha\colon X \to Y$ in $\Scat$, we have
	$L\alpha_0 = ~!_1$ (the terminal map), and $(L\alpha)_{n{+}1} = \alpha_n$.
\end{definition}

Simply put, this functor shifts all components to the right. If $X$ is an
objects in $\Scat$, the object $LX$ is pictured in $\Set$ as follows:
\(
	\begin{tikzcd}
		1 & X(0) & X(1) & X(2) & \cdots
		\arrow["!"', from=1-2, to=1-1]
		\arrow["r^X_0"', from=1-3, to=1-2]
		\arrow["r^X_1"', from=1-4, to=1-3]
		\arrow[from=1-5, to=1-4]
	\end{tikzcd}
\)
When we use a category as a model for a programming language, the objects are
models for types, whereas morphisms are models for terms and functions in the
language. The \emph{later} functor can be interpreted in the syntax as a
modality. We then show that this modality can be introduced by terms in the
syntax.

\begin{definition}[Next in $\Scat$ {\cite[\S 2.1]{birkedal2012first}}]
	\label{def:next-s}
	We call \emph{next} the natural transformation $\nu\colon id\natto L$ in
	$\Scat$ defined as $\nu_{X,0} = ~!_1$ and $\nu_{X,n{+}1} = r^X_n$ for all
	objects $X$.
\end{definition}

This category $\Scat$ comes with a specific form of fixed point for terms. In
the case of the $\lambda$-calculus for example, a fixed point operator is a
family of morphisms $\fix^\lambda_X \colon [X \to X] \to X$, such that
$\eval_{X,X} \circ \pv{\iid_{[X \to X]}}{\fix^\lambda_X} = \fix^\lambda_X$,
where $\eval_{X,Y} \colon [X \to Y] \times X \to Y$ is the morphism that
evaluates its first argument in its second argument. In $\Scat$, we have a
different kind of fixed point operator, because it involves the later modality.
We refer to it as the \emph{guarded} fixed point operator.

\begin{lemma}[Guarded fixed points {\cite[\S 2.3]{birkedal2012first}}]
	\label{lem:g-fix-point}
	In the category $\Scat$, there exists a family of morphisms $\fix_X \colon
	[LX \to X] \to X$ such that $\eval_{LX,X} \circ \pv{\iid_{[LX \to
	X]}}{\nu_X \circ \fix_X} = \fix_X$.
\end{lemma}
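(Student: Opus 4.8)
The plan is to construct $\fix_X$ explicitly, stage by stage, from the concrete description of the exponential recalled above, and then to verify separately that it is a well-defined natural transformation and that it satisfies the fixed-point equation, both by direct computation at each stage $n$.

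First I would spell out $[LX \to X](n) = \Scat(\yo(n)\times LX, X)$ concretely. Since $LX$ has components $1, X(0), X(1), \dots$, truncating after stage $n$ and using the family description of truncated natural transformations, an element of $[LX \to X](n)$ is a tuple $(f_0, f_1, \dots, f_n)$ with $f_0 \colon 1 \to X(0)$ and $f_i \colon X(i{-}1) \to X(i)$ for $1 \le i \le n$, subject to the naturality squares with the restriction maps of $LX$ and of $X$. These unfold, using $LX(0)=1$ and $LX(i{+}1)=X(i)$, to $r^X_0 \circ f_1 = f_0 \circ {!}$ and $r^X_i \circ f_{i+1} = f_i \circ r^X_{i-1}$ for $1 \le i \le n{-}1$. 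Given such an $f$, I define $x_0 = f_0(\unit) \in X(0)$ and $x_{i+1} = f_{i+1}(x_i) \in X(i{+}1)$ for $0 \le i \le n{-}1$, and set $\fix_X(n)(f) \defeq x_n \in X(n)$.

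The key auxiliary statement is that $r^X_i(x_{i+1}) = x_i$ for all $0 \le i \le n{-}1$. I would prove it by induction on $i$: the base case $i=0$ is $r^X_0(f_1(x_0)) = f_0(\unit) = x_0$, which is exactly the first naturality square applied at $x_0$; the inductive step applies the square $r^X_i \circ f_{i+1} = f_i \circ r^X_{i-1}$ at $x_i$ and then the induction hypothesis $r^X_{i-1}(x_i) = x_{i-1}$, giving $r^X_i(x_{i+1}) = f_i(r^X_{i-1}(x_i)) = f_i(x_{i-1}) = x_i$. This immediately yields naturality of $\fix_X$: the restriction of $f \in [LX\to X](n{+}1)$ to stage $n$ keeps $f_0,\dots,f_n$, hence produces the same $x_0,\dots,x_n$, so $r^X_n(\fix_X(n{+}1)(f)) = r^X_n(x_{n+1}) = x_n = \fix_X(n)(r^{[LX\to X]}_n f)$.

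Finally I would check the equation $\eval_{LX,X} \circ \pv{\iid_{[LX \to X]}}{\nu_X \circ \fix_X} = \fix_X$ componentwise. At stage $n$ the left-hand side sends $f$ to $f_n\bigl(\nu_{X,n}(x_n)\bigr)$, where $\nu_{X,0}(x_0) = \unit$ and, for $n \ge 1$, $\nu_{X,n}(x_n) = r^X_{n-1}(x_n) = x_{n-1}$ by the key statement; in either case this is $f_n$ applied precisely to the argument used to define $x_n$ (namely $\unit$ when $n=0$, and $x_{n-1}$ when $n\ge 1$), so the value is $x_n = \fix_X(n)(f)$. The only real obstacle is bookkeeping: the shift $LX(0) = 1$ makes the stage-$0$ case and the first naturality square behave specially, and one must be careful to run the induction in the key statement upward from $i = 0$ so that the restriction-compatibility built into $f$ is available at each step; beyond this indexing care, no genuinely hard step arises. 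Note that the statement only asks for existence, so no Löb-style uniqueness argument is needed here.
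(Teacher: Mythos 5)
Your proof is correct and follows essentially the same route as the paper: both define $\fix_{X,n}(f) = f_n\circ\cdots\circ f_0(\unit)$ from the concrete description of $[LX\to X](n)$ as truncated natural transformations and verify the equation stagewise. Your version is in fact more complete, since the auxiliary claim $r^X_i(x_{i+1})=x_i$ and the naturality check for $\fix_X$ make explicit what the paper's closing remark (``one forgets the last component and then reapplies it'') leaves implicit.
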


Fixed points can actually be obtained~\cite{birkedal2012first} for a more
general family of morphisms, that we call \emph{contractive} morphisms. A
morphism is contractive if it factorises through a component of the \emph{next}
natural transformation. Contractivity is stable under composition (even with a
non-contractive morphism), by tensor product and by currying.

Besides fixed points, recursive domain equations can also be solved in $\Scat$
\cite{birkedal2012first} for a certain class of functors -- roughly the ones
that involve the later functor, which we call \emph{contractive} functors.  We
do not provide full detail on this point, but we show later that the categories
we introduce -- namely, categories for non-cartesian guarded recursion -- can
solve domain equations for inductive types in a similar manner as in $\Scat$.

\section{Non-Cartesian Guarded Construction}
\label{sec:construction}

This section aims at providing a categorical tool sufficient to model guarded
recursion in a non-cartesian setting (see Examples~\ref{ex:pinj} and
\ref{ex:hilb}). We exhibit a construction similar to the category $\Scat$ (see
\secref{sub:topos}) which allows for the interpretation of guarded recursion.
We also show that guarded domain equations can be solved in this setting.
We work with categories of the form $\CC^{\N^{\mathrm op}}$, where $\N$ is
the category of natural numbers starting from $0$, with morphisms defined by
the usual order on natural numbers, as in \secref{sub:topos}. Given an object $X$ of
$\CC^{\N^{\mathrm op}}$, that is, a functor $\N^{\mathrm op}\to \CC$, its image
on the morphism $n \leq n{+}1$ is written $r^X_n \colon X(n{+}1) \to X(n)$, and is
a morphism in $\CC$. This object $X$ of $\CC^{\N^{\mathrm op}}$ can be
represented with a diagram:
\begin{equation}
	\label{eq:diagram-view}
	\begin{tikzcd}
		X(0) & X(1) & X(2) & X(3) & \cdots
		\arrow["r^X_0"', from=1-2, to=1-1]
		\arrow["r^X_1"', from=1-3, to=1-2]
		\arrow["r^X_2"', from=1-4, to=1-3]
		\arrow[from=1-5, to=1-4]
	\end{tikzcd}
\end{equation}
and a morphism $f \colon X \to Y$ can be pictured with the following diagram
in $\CC$:
\begin{equation*}
	\label{eq:diagram-view-natural}
	\begin{tikzcd}
		X(0) & X(1) & X(2) & X(3) & \cdots \\
		Y(0) & Y(1) & Y(2) & Y(3) & \cdots
		\arrow["r^X_0"', from=1-2, to=1-1]
		\arrow["r^X_1"', from=1-3, to=1-2]
		\arrow["r^X_2"', from=1-4, to=1-3]
		\arrow[from=1-5, to=1-4]
		\arrow["r^Y_0", from=2-2, to=2-1]
		\arrow["r^Y_1", from=2-3, to=2-2]
		\arrow["r^Y_2", from=2-4, to=2-3]
		\arrow[from=2-5, to=2-4]
		\arrow["f_0", from=1-1, to=2-1]
		\arrow["f_1", from=1-2, to=2-2]
		\arrow["f_2", from=1-3, to=2-3]
		\arrow["f_3", from=1-4, to=2-4]
	\end{tikzcd}
\end{equation*}
The category $\CC^{\N\op}$ looks like a category of presheaves, but is not a
topos or even cartesian closed in general. We also show in the following that
it is not a dagger category. However, it does not prevent us from using the
dagger from the underlying category $\CC$.

Let us fix a category $\CC$ with terminal object $T$. For all objects $A$,
there is a unique morphism $!_A \colon A \to T$. We write $\QQ$ for the
category $\CC^{\N\op}$, that we refer to as the \emph{guarded construction}.
This construction preserves terminal objects (and in fact, all limits and
colimits).

\begin{lemma}
	\label{lem:guarded-terminal}
	The object of $\QQ$ pictured in $\CC$ as:
	\(
		\begin{tikzcd}
			T & T & T & T & \cdots
			\arrow["\iid_T"', from=1-2, to=1-1]
			\arrow["\iid_T"', from=1-3, to=1-2]
			\arrow["\iid_T"', from=1-4, to=1-3]
			\arrow[from=1-5, to=1-4]
		\end{tikzcd}
	\)
	is a terminal object in $\QQ$.
\end{lemma}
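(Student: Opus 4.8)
The plan is to verify the universal property of the terminal object directly in the functor category $\QQ = \CC^{\N\op}$. Write $\mathbf{1}$ for the proposed object, i.e. the constant functor $\N\op \to \CC$ sending every $n$ to $T$ and every morphism $n \leq n{+}1$ to $\iid_T$; one should first check this is indeed a well-defined functor, which is immediate since $\iid_T \circ \iid_T = \iid_T$ and identities are preserved.

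Next I would take an arbitrary object $X$ of $\QQ$, pictured as in~(\ref{eq:diagram-view}), and exhibit a morphism $!_X \colon X \to \mathbf{1}$ in $\QQ$, that is, a natural transformation. The obvious candidate has components $(!_X)_n = {!}_{X(n)} \colon X(n) \to T$, the unique morphisms into the terminal object of $\CC$ guaranteed by hypothesis. Naturality amounts to checking that each square
\[
	\begin{tikzcd}
		X(n{+}1) & X(n) \\
		T & T
		\arrow["r^X_n", from=1-1, to=1-2]
		\arrow["{!}_{X(n{+}1)}"', from=1-1, to=2-1]
		\arrow["{!}_{X(n)}", from=1-2, to=2-2]
		\arrow["\iid_T"', from=2-1, to=2-2]
	\end{tikzcd}
\]
commutes, i.e.\ ${!}_{X(n)} \circ r^X_n = \iid_T \circ {!}_{X(n{+}1)}$; but both sides are morphisms $X(n{+}1) \to T$ in $\CC$, hence equal by the uniqueness part of the universal property of $T$. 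This establishes existence of a morphism $X \to \mathbf{1}$.

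For uniqueness, suppose $g \colon X \to \mathbf{1}$ is any morphism in $\QQ$, with components $g_n \colon X(n) \to T$. Each $g_n$ is a morphism into $T$ in $\CC$, so by terminality of $T$ in $\CC$ we have $g_n = {!}_{X(n)} = (!_X)_n$ for every $n$, whence $g = {!}_X$. Since $X$ was arbitrary, $\mathbf{1}$ is terminal in $\QQ$.

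There is essentially no obstacle here: the argument is the standard fact that $(\text{terminal in } \mathcal{D}) \Rightarrow (\text{constant diagram is terminal in } \mathcal{D}^{\mathcal{J}})$, and every step reduces to the uniqueness clause of the universal property of $T$ in $\CC$. The only thing to be mildly careful about is phrasing naturality and uniqueness componentwise rather than trying to argue abstractly, and noting explicitly that no structure on $\CC$ beyond the existence of a terminal object is used, so in particular the argument applies verbatim in the non-cartesian examples of interest (Examples~\ref{ex:pinj} and~\ref{ex:hilb}).
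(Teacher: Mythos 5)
Your proof is correct and takes essentially the same approach as the paper's: both define the candidate morphism componentwise as $!_{X(n)} \colon X(n) \to T$ and derive naturality and uniqueness from the uniqueness clause of the terminal object in $\CC$. Your version just spells out the naturality square explicitly, which the paper leaves implicit.
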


We now show how the guarded construction is related to the topos of trees
$\Scat$.

Categories in computer science are usually \emph{locally small}, meaning that
given two objects $A$ and $B$, there is a \emph{set} of morphisms $A\to B$.
Enrichment is the study of the structure of those sets of morphisms, which
could be vector spaces or topological spaces for example (more detail can be
found in~\cite{KELLY196515,kelly1982basic,maranda_1965}). It turns out that
homsets of $\QQ$ can be seen as objects in~$\Scat$ -- \ie~sequences of sets
with morphisms between them. The results on~$\Scat$ presented above (see
\secref{sub:topos}) can then be applied at the level of morphisms in $\QQ$.

\begin{lemma}
	\label{lem:enriched}
	The category $\QQ$ yields an $\Scat$-enriched category with the following
	data:
	\begin{itemize}
		\item objects are objects in $\CC^{\N\op}$;
		\item hom-objects $\QQ(X,Y)$ of $\Scat$, defined as: $\QQ(X,Y)(n) = \{
			(f_0, \dots, f_n) \mid f \colon X \to Y \text{ in } \CC^{\N\op}
			\}$, a set of truncated natural transformations, and $\QQ(X,Y)(n{+}1)
			\to \QQ(X,Y)(n)$ is given as the function that forgets the last
			element;
		\item for all objects $X$, a morphism $\iid_X \colon 1 \to \QQ(X,X),$
			which outputs truncated identity natural transformations;
		\item for all objects $X, Y, Z$, a morphism:
			\( \comp_{X,Y,Z} \colon \QQ(Y,Z) \times \QQ(X,Y) \to \QQ(X,Z), \)
			which composes the truncated natural transformations.
	\end{itemize}
\end{lemma}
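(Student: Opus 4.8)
The plan is to verify the axioms of an $\Scat$-enriched category, checking that each piece of data is well-defined in $\Scat$ (i.e.~is a genuine natural transformation / morphism of $\Scat$) and then that the associativity and unit coherence diagrams commute. The key observation underlying everything is that a morphism $f \colon X \to Y$ in $\CC^{\N\op}$ is completely determined by a sequence $(f_0, f_1, \dots)$ of morphisms in $\CC$ satisfying the naturality squares $r^Y_n \circ f_{n+1} = f_n \circ r^X_n$, so a \emph{truncated} such sequence $(f_0,\dots,f_n)$ — satisfying the naturality squares up to index $n-1$ — is exactly the data of an element of $\QQ(X,Y)(n)$; moreover forgetting the last component of such a tuple again satisfies the (shorter list of) naturality squares, so the restriction maps $\QQ(X,Y)(n{+}1)\to\QQ(X,Y)(n)$ are well-defined and $\QQ(X,Y)$ is a genuine presheaf on $\N^{\mathrm op}$, hence an object of $\Scat$. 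I should also remark that $\QQ(X,Y)$ is a set because $\CC$ is locally small.

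Next I would check that $\iid_X \colon 1 \to \QQ(X,X)$ and $\comp_{X,Y,Z}$ are morphisms in $\Scat$, i.e.~natural transformations. For $\iid_X$, the component at $n$ picks out the truncated identity $(\iid_{X(0)},\dots,\iid_{X(n)})$, and naturality amounts to the fact that forgetting the last component of the length-$(n{+}1)$ truncated identity yields the length-$n$ truncated identity — immediate. For composition, the component at $n$ sends a pair of truncated natural transformations $((g_0,\dots,g_n),(f_0,\dots,f_n))$ to $(g_0\circ f_0,\dots,g_n\circ f_n)$; I must check (a) that this tuple again satisfies the naturality squares — which follows by pasting the naturality squares for $f$ and $g$ — and (b) that it commutes with the restriction maps of $\QQ(Y,Z)\times\QQ(X,Y)$ and $\QQ(X,Z)$, which is again just the observation that composing componentwise and then forgetting the last entry equals forgetting the last entries and then composing. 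The product $\QQ(Y,Z)\times\QQ(X,Y)$ here is the pointwise cartesian product in $\Scat$, already noted in the excerpt to be inherited from $\Set$.

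Finally I would verify the coherence axioms of an enriched category: associativity of $\comp$ and the left/right unit laws. Because every piece of structure is defined componentwise at each $n$ by the corresponding operation in $\CC$, each coherence diagram in $\Scat$ commutes if and only if it commutes at every stage $n$, and at stage $n$ it reduces to the associativity of composition in $\CC$ and the identity laws in $\CC$ applied componentwise to tuples. So the enriched-category axioms follow directly from the fact that $\CC$ is a category, pulled back through the "work componentwise on truncated tuples" principle.

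I do not expect any serious obstacle: the content of the lemma is essentially bookkeeping. The one point requiring a little care — and the closest thing to a genuine step — is the well-definedness claims: that a truncated natural transformation really is the same data as an element of the presheaf $\QQ(X,Y)$, that the restriction maps land where claimed, and that componentwise composition preserves the naturality-square constraint defining these tuples. Once these are established, naturality of $\iid$ and $\comp$ and all the enriched coherence axioms are formal consequences of the corresponding facts in $\CC$ together with the remark that $\Scat$-morphisms and $\Scat$-equations can be checked levelwise.
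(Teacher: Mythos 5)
Your proof is correct and follows essentially the same route as the paper, whose own proof is a one-line remark that the composition is just composition of truncated natural transformations and that the enrichment conditions hold for the same reason that $\Scat$ is closed; you have simply spelled out the levelwise verification that the paper leaves implicit. One small caveat: the lemma defines $\QQ(X,Y)(n)$ as the set of truncations of \emph{full} natural transformations $X \to Y$, which may in general be a proper subset of the set of length-$(n{+}1)$ families merely satisfying the naturality squares that you identify it with; this does not affect any of your verification steps, which go through for either choice of set.
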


As we do not need much enriched category theory, we choose to keep the
conversation at the level of category -- where $\QQ$ has objects $X,Y$ and
morphism $f \colon X \to Y$, but we keep in mind the notation above for
$\QQ(X,Y)$ as an object in $\Scat$ and the corresponding morphisms in $\Scat$
for identity and composition.

In the categorical semantics of a programming language, we provide an
interpretation of types as objects in the category and of terms as morphisms.
In our construction, the morphisms live in $\QQ(X,Y)$, which is an object of
$\Scat$, so the semantics of the terms is given in a mathematical setting
sufficient to interpret guarded recursion. The enrichment described above
therefore can equip any categorical model $\CC$ with guarded recursion, and
$\QQ$ is also a model, assuming that the guarded construction $\QQ$ preserves
the structure of $\CC$ relevant to the semantics. We give an example of the
kind of structure that the guarded construction preserves with dagger rig
categories later in the paper (see \secref{sec:constructing}).

\subsection{The Guarded Structure of the Guarded Construction}
\label{sub:dagger-guarded}

A feature of categories of the form $\CC^{\N^{\mathrm op}}$, when $\CC$ has a
terminal object, is the later functor. Operationally, this functor works as a
delay modality (see \secref{sub:classic-syntax}). It can be used to keep track
of the depth of a term and the number of recursive calls. It shifts the
diagrammatic view in $\CC$ of an object $X$ (see Diagram~\ref{eq:diagram-view})
a step to the right, and adds a terminal object on the left.

\begin{definition}[Later functor]
	\label{def:later}
	The later functor $L^\CC \colon \QQ \to \QQ$ is such that for all objects
	$X$ in $\QQ$, we have $L^\CC X(0) = T$ (the terminal object) and $L^\CC
	X(n{+}1) = X(n)$; and for all morphisms $f \colon X \to Y$, we have $(L^\CC
	f)_0 = ~!_{T}$ and $(L^\CC f)_{n{+}1} = f_n$.
\end{definition}

We use the same letter $L$ for any later functor when it is not ambiguous.  If
ambiguity arises, we use the notation $L^\Set \colon \Scat \to \Scat$ (because
$\Scat = \Set^\infty$) and $L^\CC$.

Let us recall that an $\Scat$-enriched functor is a functor whose action on
morphisms are morphisms in $\Scat$. We see in the next section that being
$\Scat$-enriched is one of the necessary conditions for a functor to admit a
fixed point.

\begin{lemma}
	\label{lem:later-enriched}
	The functor $L^\CC \colon \QQ \to \QQ$ is $\Scat$-enriched.
\end{lemma}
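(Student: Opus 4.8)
We need to show that the later functor $L^\CC \colon \QQ \to \QQ$ is $\Scat$-enriched, i.e.\ that its action on morphisms is a morphism in $\Scat$. Concretely, for each pair of objects $X, Y$ of $\QQ$, the assignment $f \mapsto L^\CC f$ must define a natural transformation $(L^\CC)_{X,Y} \colon \QQ(X,Y) \to \QQ(L^\CC X, L^\CC Y)$ in $\Scat$, compatible with identities and composition in the enriched sense.

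The plan is first to describe the candidate morphism componentwise. Recall from Lemma~\ref{lem:enriched} that $\QQ(X,Y)(n)$ is the set of truncated natural transformations $(f_0, \dots, f_n)$, and from Definition~\ref{def:later} that $(L^\CC f)_0 = {!}_T$ while $(L^\CC f)_{n{+}1} = f_n$. So at stage $n$ the component $(L^\CC)_{X,Y}$ should send $(f_0, \dots, f_n) \in \QQ(X,Y)(n)$ to $({!}_T, f_0, \dots, f_n) \in \QQ(L^\CC X, L^\CC Y)(n{+}1)$; but the codomain at stage $n$ is $\QQ(L^\CC X, L^\CC Y)(n)$, which consists of tuples of length $n{+}1$, namely $({!}_T, f_0, \dots, f_{n-1})$. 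Hence the map at stage $n$ is $(f_0,\dots,f_n) \mapsto ({!}_T, f_0, \dots, f_{n-1})$; this is well-defined because $({!}_T, f_0, \dots, f_{n-1})$ is indeed a truncated natural transformation $L^\CC X \to L^\CC Y$ — the only new naturality square to check is the leftmost one, $r^{L^\CC Y}_0 \circ f_0 = {!}_T \circ r^{L^\CC X}_0$, which holds trivially since both sides are the unique map into $T$.

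Next I would verify naturality in $\Scat$: the restriction maps $\QQ(X,Y)(n{+}1) \to \QQ(X,Y)(n)$ forget the last element, and likewise for $\QQ(L^\CC X, L^\CC Y)$; the square commutes because forgetting the last component and then prepending ${!}_T$ (and dropping one more) agrees with prepending ${!}_T$ first and then forgetting. This is the routine compatibility, done by a direct tuple computation. Then I would check the two enriched-functoriality axioms: that $(L^\CC)_{X,X}$ composed with $\iid_X \colon 1 \to \QQ(X,X)$ yields $\iid_{L^\CC X}$ — immediate, since the truncated identity on $X$ maps to $({!}_T, \iid_{X(0)}, \dots) = $ truncated identity on $L^\CC X$ using that ${!}_T = \iid_T$ at level $0$ — and that $(L^\CC)$ commutes with the composition morphisms $\comp_{X,Y,Z}$, which reduces to the observation that $L^\CC(g \circ f)_0 = {!}_T = {!}_T \circ {!}_T$ and $L^\CC(g\circ f)_{n{+}1} = g_n \circ f_n = (L^\CC g)_{n{+}1} \circ (L^\CC f)_{n{+}1}$, i.e.\ $L^\CC$ is already a (strict) functor on $\QQ$.

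There is no real obstacle here; the statement is essentially bookkeeping. The one point requiring a moment's care is the index shift — that prepending ${!}_T$ sends stage-$n$ data to stage-$(n{+}1)$ data, so that after the mandatory truncation it lands in the correct set $\QQ(L^\CC X, L^\CC Y)(n)$ of $(n{+}1)$-tuples — and keeping the truncated naturality squares straight when one of the objects has a terminal object freshly inserted at position $0$. Once the componentwise description is pinned down, every verification is a one-line equality, several of them collapsing to uniqueness of maps into the terminal object $T$.
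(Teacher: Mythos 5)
Your proof is correct, and it is the routine argument the paper leaves implicit: the paper does not actually include a proof of Lemma~\ref{lem:later-enriched} in its appendix, treating it as bookkeeping. Your componentwise description $(f_0,\dots,f_n)\mapsto({!}_T,f_0,\dots,f_{n-1})$, the naturality square for the truncation maps, and the identity/composition checks (all collapsing to uniqueness of maps into $T$) are exactly what is needed.
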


The action on objects of the later functors are linked due to the enrichment.

\begin{lemma}
	\label{lem:later-enriched-view}
	If $X$ and $Y$ are two objects in $\QQ$, then we have $L^\Set \QQ(X,Y)
	\cong \QQ(L^\CC X, L^\CC Y)$.
\end{lemma}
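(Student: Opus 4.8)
The statement compares two objects of $\Scat$: on one side $L^\Set$ applied to the hom-object $\QQ(X,Y)$, on the other side the hom-object $\QQ(L^\CC X, L^\CC Y)$. I would prove the isomorphism by unwinding both sides componentwise and exhibiting a natural bijection at each stage $n$, then checking it commutes with the restriction maps. So the plan is to compute, for each $n \in \N$, the sets $(L^\Set \QQ(X,Y))(n)$ and $\QQ(L^\CC X, L^\CC Y)(n)$, produce a bijection between them, and verify naturality in $n$ (i.e.\ compatibility with the maps ``forget the last component'').

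First I would handle the base case $n = 0$. By Definition~\ref{def:later-s}, $(L^\Set \QQ(X,Y))(0) = 1$, a one-element set. On the other side, $\QQ(L^\CC X, L^\CC Y)(0)$ is the set of truncated natural transformations $(g_0)$ from $L^\CC X$ to $L^\CC Y$ over the single stage $0$; but by Definition~\ref{def:later}, $L^\CC X(0) = T$ is terminal in $\CC$, so the only morphism $L^\CC X(0) \to L^\CC Y(0) = T$ is $\iid_T = {!}_T$, and there is no naturality square to satisfy yet. Hence this set is also a singleton, and there is a unique bijection. Then for the inductive stage $n{+}1$: by Definition~\ref{def:later-s}, $(L^\Set \QQ(X,Y))(n{+}1) = \QQ(X,Y)(n)$, which by Lemma~\ref{lem:enriched} is the set of truncated natural transformations $(f_0, \dots, f_n) \colon X \to Y$. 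On the other side, $\QQ(L^\CC X, L^\CC Y)(n{+}1)$ is the set of truncated natural transformations $(g_0, g_1, \dots, g_{n{+}1}) \colon L^\CC X \to L^\CC Y$; by Definition~\ref{def:later}, $L^\CC X(0) = L^\CC Y(0) = T$ forces $g_0 = {!}_T$, and for $k \geq 1$ we have $L^\CC X(k) = X(k{-}1)$, $L^\CC Y(k) = Y(k{-}1)$, with the restriction maps of $L^\CC X$ shifted accordingly (the map $L^\CC X(1) \to L^\CC X(0)$ being ${!}_{X(0)}$, which composes trivially). The naturality squares of $(g_0, \dots, g_{n{+}1})$ involving stage $0$ are automatic because $T$ is terminal, and the remaining squares (between stages $1, \dots, n{+}1$) are exactly the naturality squares for the family $(g_1, \dots, g_{n{+}1})$ regarded as a truncated transformation $X \to Y$ over stages $0, \dots, n$. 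So the assignment $(g_0, g_1, \dots, g_{n{+}1}) \mapsto (g_1, \dots, g_{n{+}1})$ is a bijection onto $\QQ(X,Y)(n)$, with inverse prepending ${!}_T$.

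Next I would check that these componentwise bijections assemble into a morphism of $\Scat$, i.e.\ commute with the restriction maps $r$. On the $L^\Set \QQ(X,Y)$ side, $r^{L^\Set\QQ(X,Y)}_{n{+}1}\colon \QQ(X,Y)(n{+}1) \to \QQ(X,Y)(n)$ is by Definition~\ref{def:later-s} just $r^{\QQ(X,Y)}_n$, which by Lemma~\ref{lem:enriched} forgets the last component of $(f_0,\dots,f_{n{+}1})$; and $r^{L^\Set\QQ(X,Y)}_0\colon \QQ(X,Y)(0) \to 1$ is the terminal map. On the $\QQ(L^\CC X, L^\CC Y)$ side, the restriction maps again forget the last component. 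Under the shift bijection, forgetting the last component of $(g_0, \dots, g_{n{+}2})$ and then dropping $g_0$ agrees with first dropping $g_0$ and then forgetting the last component; so the squares commute. The only mild subtlety is the square at the bottom, linking stage $1$ to stage $0$: it commutes because the target is the singleton $1$ (resp.\ because $T$ is terminal). Having a bijection in each component compatible with restrictions is precisely an isomorphism in $\Scat = \Set^{\N\op}$, which gives the claim.

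\textbf{Main obstacle.} The only place requiring genuine care — rather than bookkeeping — is the treatment of the degenerate stage $0$ on the $L^\CC$ side: one must argue that the naturality square of a transformation $L^\CC X \to L^\CC Y$ straddling stages $0$ and $1$ imposes \emph{no} constraint (because $L^\CC X(0) = L^\CC Y(0) = T$ is terminal and every leg into $T$ is unique), so that data over stages $1, \dots, n{+}1$ of a transformation $L^\CC X \to L^\CC Y$ is exactly the data of a transformation $X \to Y$ over stages $0, \dots, n$, with the restriction maps matching up after the shift. Everything else is unwinding Definitions~\ref{def:later-s}, \ref{def:later} and Lemma~\ref{lem:enriched}. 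I would also remark that the isomorphism is natural in $X$ and $Y$, which follows immediately since all the maps involved are defined by shifting and relabelling components, and this is what makes the statement usable in the next section for the fixed-point argument.
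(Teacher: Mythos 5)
Your proposal is correct and follows essentially the same route as the paper: compute both sides stagewise, note that stage $0$ and the forced first component $!_T$ make the two sets agree up to a shift, and conclude. You additionally verify compatibility with the restriction maps and spell out why the stage-$0$ naturality square is vacuous, which the paper leaves implicit; this is a welcome bit of extra care rather than a divergence.
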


Like in the category $\Scat$, the delay embodied by the functor $L$ can be
introduced by a natural transformation, called \emph{next}. This natural
transformation helps us introduce the delay in a programming language, as the
denotational semantics of a delayed program.

\begin{definition}[Next]
	\label{def:next}
	The \emph{next} natural transformation $\nu^\CC \colon \iid \natto L^\CC$
	is such that if $X$ is an object in $\QQ$, we have $\nu^\CC_{X,0} =
	~!_{X(0)}$ and $\nu^\CC_{X,n{+}1} = r^X_n$. It can be observed as a
	commutative diagram in $\CC$, where it maps the sequence representing $X$
	to the sequence of $L^\CC X$ as follows:
	\[\begin{tikzcd}
		X(0) & X(1) & X(2) & X(3) & \cdots \\
		T & X(0) & X(1) & X(2) & \cdots
		\arrow["r^X_0"', from=1-2, to=1-1]
		\arrow["r^X_1"', from=1-3, to=1-2]
		\arrow["r^X_2"', from=1-4, to=1-3]
		\arrow[from=1-5, to=1-4]
		\arrow["!"', from=1-1, to=2-1]
		\arrow["r^X_0", from=1-2, to=2-2]
		\arrow["r^X_1", from=1-3, to=2-3]
		\arrow["r^X_2", from=1-4, to=2-4]
		\arrow["!", from=2-2, to=2-1]
		\arrow["r^X_0", from=2-3, to=2-2]
		\arrow["r^X_1", from=2-4, to=2-3]
		\arrow[from=2-5, to=2-4]
	\end{tikzcd}\]
\end{definition}

\begin{remark}
	Similarly to the later functor, the natural transformation $\nu$ can be
	defined in $\Scat$ (see Definition~\ref{def:next-s}) as well as in $\QQ$.
	If any ambiguity arises, we use the notation $\nu^\Set \colon \id_\Scat
	\natto L^\Set$ and $\nu^\CC \colon \id_\QQ \natto L^\CC$.
\end{remark}

\begin{lemma}
	\label{lem:next-constructed}
	If $X$ and $Y$ are two objects in $\QQ$, then we have $\nu^\Set_{\QQ(X,Y)}
	\cong L^\CC_{X,Y}$.
\end{lemma}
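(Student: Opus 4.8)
The plan is to unwind both morphisms as concrete maps in $\Scat$ and check that they coincide once we transport along the isomorphism of Lemma~\ref{lem:later-enriched-view}. Write $\theta \colon L^\Set\QQ(X,Y) \xrightarrow{\sim} \QQ(L^\CC X, L^\CC Y)$ for that isomorphism: at level $0$ it is the unique map $1 \to \QQ(L^\CC X, L^\CC Y)(0)$, whose codomain is the singleton $\{\iid_T\}$ since $L^\CC X(0) = L^\CC Y(0) = T$; at level $n{+}1$ it sends a truncated natural transformation $(f_0,\dots,f_n) \in \QQ(X,Y)(n) = L^\Set\QQ(X,Y)(n{+}1)$ to $(\iid_T, f_0,\dots,f_n)$, which is a valid element of $\QQ(L^\CC X, L^\CC Y)(n{+}1)$ because the naturality square at the bottom of $L^\CC X \to L^\CC Y$ sits between terminal objects and hence commutes automatically. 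Its inverse $\theta^{-1}$ drops the leading $\iid_T$ and reindexes, and naturality of $\theta$ is immediate from the shape of the restriction maps.

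First I would describe $\nu^\Set_{\QQ(X,Y)}$: by Definition~\ref{def:next-s}, its component at $0$ is the terminal map $\QQ(X,Y)(0) \to 1$, and its component at $n{+}1$ is the restriction $r^{\QQ(X,Y)}_n \colon \QQ(X,Y)(n{+}1) \to \QQ(X,Y)(n)$, which by Lemma~\ref{lem:enriched} forgets the last component, $(f_0,\dots,f_{n+1}) \mapsto (f_0,\dots,f_n)$. Next I would describe $L^\CC_{X,Y}$: by Lemma~\ref{lem:later-enriched} it is the $\Scat$-morphism whose component at $n$ sends $(f_0,\dots,f_n)$ to the level-$n$ truncation of $L^\CC f$; and by Definition~\ref{def:later}, $(L^\CC f)_0 = {!}_T = \iid_T$ and $(L^\CC f)_{k+1} = f_k$, so that truncation equals $(\iid_T, f_0, \dots, f_{n-1})$ for $n \geq 1$ and $(\iid_T)$ for $n = 0$.

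It then remains to verify $\theta^{-1} \circ L^\CC_{X,Y} = \nu^\Set_{\QQ(X,Y)}$. At level $0$ both composites are the terminal map to $1$. At level $n{+}1$, $L^\CC_{X,Y}$ sends $(f_0,\dots,f_{n+1})$ to $(\iid_T, f_0,\dots,f_n)$, and $\theta^{-1}$ strips the leading $\iid_T$ to give $(f_0,\dots,f_n)$, which is exactly the image under $\nu^\Set_{\QQ(X,Y)}$; naturality in $n$ holds since everything is assembled from restriction maps. The only point requiring care — and the one I would flag explicitly — is the \emph{index shift}: under $\theta$ the codomain $\QQ(L^\CC X, L^\CC Y)(n)$ corresponds to $\QQ(X,Y)(n{-}1)$, so ``apply $L^\CC$ and truncate at level $n$'' consumes only the level-$(n{-}1)$ data of the input, which is precisely why it matches $\nu^\Set$ rather than the identity, with the degenerate case $n = 0$ (where the codomain collapses to $1$) treated separately.
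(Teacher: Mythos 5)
Your proposal is correct and follows the same route as the paper, which simply observes that the two morphisms are identified by the isomorphism $L^\Set\QQ(X,Y)\cong\QQ(L^\CC X,L^\CC Y)$ of Lemma~\ref{lem:later-enriched-view}; you have just spelled out the componentwise verification (including the index shift and the degenerate level-$0$ case) that the paper leaves implicit.
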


\subsection{Fixed points of Locally Contractive Functors}
\label{sub:guarded-fixed}

We start by making precise what we mean by fixed point of a functor.

\begin{definition}[Fixed Point]
	\label{def:fixed-point}
	Given an endofunctor $T \colon \QQ \to \QQ$, a fixed point of $T$ is a pair
	$(X,\alpha \colon TX \to X)$ such that $\alpha$ is an isomorphism.
\end{definition}

A \emph{locally contractive} $\Scat$-functor is one whose morphism mapping is a
contractive morphism in $\Scat$ (see \secref{sub:topos}). We show later that a
locally contractive functor has a unique fixed point, up to isomorphism.  The
functor $L^\CC \colon \QQ \to \QQ$ is an example of locally contractive
functor;
and given two $\Scat$-enriched functors $F,G \colon \QQ \to \QQ$ such that
either $F$ or $G$ is locally contractive, then $FG$ is locally
contractive~\cite[Lemma 7.3]{birkedal2012first}.

Given a morphism $\alpha\colon X \to Y$ in $\QQ$, one says that
$\alpha$ is an $n$-isomorphism if the first $n$ components of $\alpha$
(that is to say, $\alpha_0,\dots,\alpha_{n-1}$) are isomorphisms.

\begin{lemma}
	\label{lem:niso}
	A locally contractive functor maps an $n$-isomorphism to an
	$n{+}1$-isomorphism.
\end{lemma}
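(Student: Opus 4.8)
The plan is to peel $F$ apart using local contractivity: its morphism mapping factors through the \emph{later} functor, $L^\CC$ improves $n$-isomorphisms for free, and what is left is to see that the ``functorial part'' $G$ of a locally contractive functor preserves being an $(n{+}1)$-isomorphism.

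\textbf{Step 1 (recast local contractivity through $L^\CC$).} By Lemma~\ref{lem:later-enriched-view} we may identify $L^\Set\QQ(X,Y)$ with $\QQ(L^\CC X, L^\CC Y)$, and by Lemma~\ref{lem:next-constructed} the component $\nu^\Set_{\QQ(X,Y)}$ is (up to this identification) the morphism mapping $L^\CC_{X,Y}$. Hence Definition~\ref{def:locally-contractive-functor} yields, for all $X,Y$, a morphism $G_{X,Y}\colon \QQ(L^\CC X, L^\CC Y)\to \QQ(FX,FY)$ in $\Scat$ with $Ff = G_{X,Y}(L^\CC f)$ for every $f\colon X\to Y$, and the two coherence squares say precisely that $G$ preserves identities and composites.

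\textbf{Step 2 ($L^\CC$ improves $n$-isomorphisms).} If $\alpha$ is an $n$-isomorphism then $L^\CC\alpha$ is an $(n{+}1)$-isomorphism: by Definition~\ref{def:later} its $0$-th component is $\iid_T$, which is an isomorphism, and $(L^\CC\alpha)_{k+1}=\alpha_k$ is an isomorphism for $k<n$. So it remains to prove that $G_{X,Y}$ sends an $(n{+}1)$-isomorphism $\beta\colon L^\CC X\to L^\CC Y$ to an $(n{+}1)$-isomorphism; applying this to $\beta=L^\CC\alpha$ closes the argument.

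\textbf{Step 3 ($G$ preserves $(n{+}1)$-isomorphisms), the crux.} Since $G_{X,Y}$ is a morphism in $\Scat$, its component at index $k$ depends only on $\beta_0,\dots,\beta_k$; so it suffices to show each $[G_{X,Y}(\beta)]_k$ ($k\le n$) is an isomorphism of $\CC$, given $\beta_0,\dots,\beta_k$ isomorphisms. I would argue by induction on $k$. For $k=0$: evaluating the identity and composition coherence squares for $G$ at index $0$, where the relevant $\Scat$-objects degenerate to the terminal set, gives $[G_{X,X}(\iid_{L^\CC X})]_0=\iid_{FX(0)}$ and $[G_{X,Z}(\gamma\beta)]_0=[G_{Y,Z}(\gamma)]_0\circ[G_{X,Y}(\beta)]_0$ for \emph{all} objects $X,Y,Z$ — equalities of morphisms in $\CC$ that hold independently of whether $\QQ$ has any morphisms between these objects — so $[G_{X,Y}(\beta)]_0$ admits $[G_{Y,X}(\cdot)]_0$ as a two-sided inverse (an ``indiscrete category'' argument), hence is an isomorphism. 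For the inductive step I would reduce to the case of morphisms coming from the constant-sequence embedding $\CC\to\QQ$: precomposing with $\CC\to\QQ$ and postcomposing with evaluation at index $k$ turns $F$ into an honest endofunctor of $\CC$ on the relevant hom-sets, which therefore preserves isomorphisms, and one transports this along the canonical comparison maps $X\to \Delta X(0)$ (constant object on $X(0)$) using the coherence squares to conclude for $[G_{X,Y}(\beta)]_k$.

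The main obstacle is exactly this inductive step: one cannot simply invert $\beta$ component by component, because the tuple $(\beta_0^{-1},\dots,\beta_k^{-1})$ need not underlie a morphism of $\QQ$ (the hom-objects of $\QQ$ record only restrictions of genuine natural transformations, and a partial inverse may fail to extend). So the argument must run entirely through the coherence of $G$ as a diagram in $\Scat$ and the functoriality of $F$ on constant objects, rather than through an honest inverse morphism; making the bookkeeping that links $G_{X,Y}$ at index $k$ to the constant case precise is the delicate part, and is where I would expect to spend most of the effort.
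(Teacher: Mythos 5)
Steps 1--2 and the base case of Step 3 are sound; in particular your index-$0$ ``indiscrete category'' argument is correct and even shows that $(Ff)_0$ is invertible for \emph{every} morphism $f$, which is exactly the $n=0$ instance of the lemma. The problem is the inductive step, which is where the whole content of the lemma sits, and your sketch does not close it. Concretely, the comparison map $c_X\colon X\to\Delta X(0)$ into the constant object is only a $1$-isomorphism (its components at indices $\ge 1$ are composites of the restriction maps $r^X_i$, which are arbitrary morphisms of $\CC$), so after a single application of $F$ its image $Fc_X$ is at best a $2$-isomorphism. In the naturality square $\Delta(\alpha_0)\circ c_X = c_Y\circ\alpha$, applying $F$ and evaluating at an index $k\ge 2$ gives $(Fc_Y)_k\circ(F\alpha)_k=(F\Delta\alpha_0)_k\circ(Fc_X)_k$, and you cannot invert $(Fc_X)_k$ or $(Fc_Y)_k$ there; so knowing that $(F\Delta\alpha_0)_k$ is invertible (which is true, since $\Delta\alpha_0$ is a genuine isomorphism of $\QQ$) tells you nothing about $(F\alpha)_k$. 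The reduction to constant objects discards precisely the information the inductive step needs.

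What the coherence squares of Definition~\ref{def:locally-contractive-functor} actually give you at index $k$ is that $G_{-,-,k}$ is a functor from the truncated category with hom-sets $\QQ(X,Y)(k{-}1)$ to the one with hom-sets $\QQ(FX,FY)(k)$, and functors preserve isomorphisms. So the argument closes in one line \emph{provided} the truncation $(\alpha_0,\dots,\alpha_{k-1})$ is invertible in that truncated category, i.e.\ the componentwise inverses $(\alpha_0^{-1},\dots,\alpha_{k-1}^{-1})$ are the truncation of a genuine morphism $Y\to X$ of $\QQ$; the inverse produced by $G$ then lies in $\QQ(FY,FX)(k)$, hence is again the truncation of a genuine morphism, so this strengthened hypothesis propagates through the induction, and your index-$0$ argument supplies it at the start (the inverse there is $G_{Y,X,0}(*)$, by definition an element of $\QQ(FY,FX)(0)$). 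The obstacle you flag at the end --- that the componentwise inverses need not underlie a morphism of $\QQ$ --- is exactly the gap between the paper's literal definition of $n$-isomorphism (first $n$ components invertible in $\CC$) and the notion the functoriality of $G$ can act on; any proof must either strengthen the induction hypothesis in this way (which is all that Theorem~\ref{th:fixedpoint} needs, since its chain starts at $!\colon FT\to T$) or produce the missing extension. Your proposal does neither: it names the obstacle and then routes around it through the constant-object reduction, which fails for the reason above, so as written the argument is incomplete precisely at the point you identified as the crux.
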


We can now prove the next theorem, similar to the one in the
literature~\cite{birkedal2012first}.

\begin{theorem}
	\label{th:fixedpoint}
	If an endofunctor is locally contractive, then it has a fixed point.
\end{theorem}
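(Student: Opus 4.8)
The plan is to mimic the fixed-point construction from synthetic guarded domain theory \cite{birkedal2012first}, building the fixed point as a limit of an $\omega$-chain of successively-better approximations. Let $F \colon \QQ \to \QQ$ be locally contractive. First I would fix \emph{any} object $X_0$ in $\QQ$ together with \emph{any} morphism $\alpha_0 \colon F X_0 \to X_0$ (for instance, taking $X_0$ to be the terminal object of $\QQ$ from Lemma~\ref{lem:guarded-terminal} and $\alpha_0$ the terminal map; such a morphism is trivially a $0$-isomorphism since there is no condition on its first $0$ components). Then I would define the chain $X_{n+1} = F X_n$ and $\alpha_{n+1} = F \alpha_n \colon F X_{n+1} \to X_{n+1}$. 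By Lemma~\ref{lem:niso}, since $\alpha_0$ is a $0$-isomorphism, $\alpha_n$ is an $n$-isomorphism for every $n$. The morphisms $\alpha_n$ thus form a diagram of objects in $\QQ$ whose connecting maps become isomorphisms on more and more components as $n$ grows.

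The key idea is that although $\QQ = \CC^{\N\op}$ may not have all $\omega$-limits, the \emph{specific} limit we need can be computed pointwise in $\CC$ using only identities and the structure already present: define $X_\infty$ by $X_\infty(k) = X_{k+1}(k)$ (or $X_k(k)$ — I would pin down the exact index shift so that the $n$-isomorphism bookkeeping lines up), with restriction maps induced from the $r^{X_n}$ and the components of the $\alpha_n$ that are isomorphisms. Concretely, because $\alpha_n$ is an $n$-isomorphism, the $k$-th component of $X_\infty$ can be extracted from any $X_n$ with $n > k$, and all these choices agree up to the canonical isomorphisms, so $X_\infty$ is well-defined as an object of $\QQ$. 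One then assembles a morphism $\alpha_\infty \colon F X_\infty \to X_\infty$ from the $\alpha_n$: since $F$ is locally contractive, $(F \alpha_\infty)$ agrees with $\alpha_{n+1}$ on the first $n+1$ components by Lemma~\ref{lem:niso} applied to the comparison maps, and one checks that $\alpha_\infty$ is an $n$-isomorphism for \emph{all} $n$, hence an isomorphism in $\QQ$ (a morphism in $\CC^{\N\op}$ all of whose components are isomorphisms is an isomorphism). This gives a fixed point $(X_\infty, \alpha_\infty)$ in the sense of Definition~\ref{def:fixed-point}.

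I would organise the verification in three steps: (i) construct the approximating chain and invoke Lemma~\ref{lem:niso} to get that $\alpha_n$ is an $n$-isomorphism; (ii) define $X_\infty$ and the comparison maps $X_\infty \to X_n$, checking the cocone/cone conditions componentwise in $\CC$ — here the point is that each component stabilises after finitely many stages, so no genuine limit in $\CC$ is required; (iii) build $\alpha_\infty$ using local contractivity (the factorisation through $\nu$ is what guarantees that $F$ applied to an "eventual isomorphism" is again an eventual isomorphism, with the gained component) and conclude $\alpha_\infty$ is iso. The main obstacle I anticipate is step (ii)/(iii): carefully managing the index shifts so that "$n$-isomorphism goes to $n{+}1$-isomorphism" (Lemma~\ref{lem:niso}) propagates through the inductively-defined chain and so that $\alpha_\infty$ genuinely commutes with \emph{all} the restriction maps $r^{X_\infty}_k$ — that is, checking naturality of $\alpha_\infty$ as a morphism in $\CC^{\N\op}$, not merely that each component is invertible. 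Uniqueness up to isomorphism, while mentioned in the surrounding text, is not asserted in the statement of Theorem~\ref{th:fixedpoint}, so I would not need it here, though it follows by the same eventual-isomorphism argument applied to any two fixed points.
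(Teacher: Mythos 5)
Your proposal is correct and follows essentially the same route as the paper: iterate $F$ on the terminal object, use Lemma~\ref{lem:niso} to make the connecting maps $n$-isomorphisms, and take the diagonal $\Omega(n) = F^{n+1}T(n)$, which is well-defined because each component stabilises after finitely many stages. The only cosmetic difference is that the paper concludes by showing $\Omega$ and $F\Omega$ are both limits of the terminal sequence rather than assembling $\alpha_\infty$ directly, but the underlying argument is identical.
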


We generalise the notion of locally contractive functors: an $\Scat$-functor $G
\colon (\QQ)^k \to \QQ$ is \emph{locally contractive} if it is locally
contractive on all variables.

\begin{theorem}[Parameterised Fixed Point]
	\label{th:param}
	A locally contractive functor admits a parameterised fixed point. In
	detail, if we have a locally contractive functor $G \colon (\QQ)^{k+1} \to
	\QQ$, there is a pair $(G^\noma,\phi^G)$ such that:
	\begin{itemize}
		\item $G^\noma \colon (\QQ)^k \to \QQ$ is a locally contractive functor,
		\item $\phi^G \colon G \circ \pv{\mathrm{id}}{G^\noma} \natto G^\noma$ is a
			natural isomorphism,
		\item for every object $\ov F$ in $(\QQ)^k$,
			the pair $(G^\noma \ov F,\phi^G)$ is the fixed point of $G(\ov F,-)$.
	\end{itemize}
\end{theorem}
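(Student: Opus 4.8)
The plan is to build $T^\noma$ by applying the fixed-point construction from Theorem~\ref{th:fixedpoint} uniformly in the parameter, and then to check that the resulting assignment is itself a locally contractive functor and that the fixed-point isomorphisms assemble into a natural transformation. So first I would fix a vector $\ov F$ of objects of $(\QQ)^k$ and consider the endofunctor $T(\ov F, -) \colon \QQ \to \QQ$; it is locally contractive in its last variable by the hypothesis that $T$ is locally contractive on all variables, so Theorem~\ref{th:fixedpoint} gives a fixed point $(T^\noma \ov F, \alpha_{\ov F})$ with $\alpha_{\ov F} \colon T(\ov F, T^\noma \ov F) \to T^\noma \ov F$ an isomorphism. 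Define $T^\noma$ on objects this way; we set $\phi^T_{\ov F} = \alpha_{\ov F}$. The first order of business after that is to make $T^\noma$ a functor on morphisms. For this I would revisit the proof of Theorem~\ref{th:fixedpoint}: the fixed point there is obtained as a limit (an $\omega$-limit in $\CC$, level by level) of the chain $T^0, T(\ov F, T^0), T(\ov F, T(\ov F, T^0)), \dots$, and the defining iso arises from the shift on this chain together with the $n$-isomorphism behaviour from Lemma~\ref{lem:niso}. Because everything in that construction is built from $T$ and limits, feeding a morphism $\ov g \colon \ov F \to \ov F'$ in $(\QQ)^k$ through the same construction (using functoriality of $T$ in the parameter variables and of limits) produces a canonical morphism $T^\noma \ov g \colon T^\noma \ov F \to T^\noma \ov F'$, and functoriality of $T^\noma$ follows from functoriality of $T$ and uniqueness of maps into limits.

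Next I would check that $T^\noma$ is $\Scat$-enriched and locally contractive. Enrichment: the action $T^\noma_{\ov F, \ov F'} \colon (\QQ)^k(\ov F, \ov F') \to \QQ(T^\noma \ov F, T^\noma \ov F')$ is assembled from the $\Scat$-enriched action of $T$ and the limit functor, both of which are $\Scat$-morphisms, so the composite is as well. Local contractivity: I claim $T^\noma$ is in fact contractive in \emph{each} remaining variable. Fix all but the $j$-th parameter; I want to exhibit a factorisation of $T^\noma_{(-)}$ through $\nu$. The key is that in the limit defining $T^\noma$, the $(n{+}1)$-st component of $T^\noma \ov F$ depends only on the $n$-th approximant, which in turn — by local contractivity of $T$ in the $j$-th variable and Lemma~\ref{lem:niso} applied repeatedly — depends on $\ov F$ only "one step in the future". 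More precisely, the dependence of the $n$-truncation of $T^\noma \ov g$ on $\ov g$ factors through the $(n{-}1)$-truncation of $\ov g$, which is exactly the statement that $T^\noma_{(-)}$ factors through $\nu$; the compatibility with composition and identities (the two coherence diagrams in Definition~\ref{def:locally-contractive-functor}) is then a diagram chase using that $T$ satisfies the corresponding diagrams in each variable. This is where I expect most of the bookkeeping to live.

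The remaining two bullet points are comparatively short. Naturality of $\phi^T \colon T \circ \pv{\mathrm{id}}{T^\noma} \natto T^\noma$: given $\ov g \colon \ov F \to \ov F'$, the square relating $\alpha_{\ov F}$ and $\alpha_{\ov F'}$ commutes because both $T^\noma \ov g$ and $T(\ov g, T^\noma \ov g)$ are, by construction, the unique morphisms induced on the limit chains by $\ov g$, and $\alpha$ is the shift isomorphism, which is natural in the chain; so the square commutes by uniqueness into the limit. The third bullet — that $(T^\noma \ov F, \phi^T_{\ov F})$ is \emph{the} fixed point of $T(\ov F, -)$ — holds because that is precisely how $T^\noma \ov F$ was defined, and fixed points of a locally contractive functor are unique up to unique isomorphism (from Theorem~\ref{th:fixedpoint} together with the usual argument: if $(Y, \beta)$ is another fixed point, the morphism comparing it to $(T^\noma \ov F, \alpha_{\ov F})$ is the fixed point of a contractive map on $\QQ(Y, T^\noma \ov F)$, hence unique, and its inverse exists symmetrically). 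The main obstacle, then, is not any single slick argument but the need to re-open the proof of Theorem~\ref{th:fixedpoint} and track the parameter dependence carefully enough to read off both functoriality and contractivity of $T^\noma$ from it; once the limit construction is seen to be manifestly functorial and "contractive in the parameter", every verification reduces to uniqueness of maps into a limit.
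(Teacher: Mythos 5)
Your proposal is correct and follows essentially the same route as the paper: pointwise application of Theorem~\ref{th:fixedpoint} (the diagonal/limit of the iterated chain $T(\ov F,-)^{n+1}T$), functoriality of $T^\noma$ read off from functoriality of $T$ and of the limit, local contractivity because $T$ is applied at least once in that formula, and naturality plus invertibility of $\phi^T$ via the limit squares and Lemma~\ref{lem:niso}. The uniqueness-of-fixed-points argument you append is a harmless extra not needed for the statement as given.
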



\begin{remark}
	A stronger theorem can be stated~\cite[Theorem 7.5]{birkedal2012first},
	which allows for the interpretation of recursive dependent types. In this
	paper, we work within categories that are not necessarily monoidal closed;
	therefore we choose our type system to have a single polarity, and we
	restrict ourselves to (co)inductive types instead of general recursive
	types.
\end{remark}

We have shown that inductive domain equations can be solved in $\QQ$, without
any specific assumption on $\CC$. In the next section, we focus on reversible
programming through its model in dagger rig categories, where \emph{rig} stands
for the data structure of tensors and sum types, and the \emph{dagger} ensures
that operations are reversible. We later use Theorem~\ref{th:param} with a
subcategory of $\CC$ (see \secref{sub:sem-ground-types}), in order to have a
better interaction with the dagger structure.

\section{Dagger and Rig Structure in Guarded Models}
\label{sec:constructing}

In this section, we recall the main definitions and present some examples of
dagger rig categories (see \secref{sub:dagger-rig}). We then study the effect
of the guarded construction on a dagger rig category with a terminal object
(that turns out to be a zero object because of the dagger). The rig structure
generalises well to the guarded constructed category (see
\secref{sec:construction}), but not the dagger (see Remark~\ref{rem:dagger}).
In order to seize the interaction with the dagger structure, we refine our
construction and show that our construction contains a sufficiently big
subcategory that is also dagger rig (see \secref{sub:dagger-guarded-dagger}).
This category is then used in the rest of the paper, in particular to interpret
reversible functions (see \secref{sub:fo-fun}).

We study the interaction between the construction above and dagger rig
categories.

\subsection{Background: Dagger Rig Categories}
\label{sub:dagger-rig}

We start with the notion of \emph{dagger}. A category $\CC$ is a \emph{dagger
category} if there is a contravariant endofunctor $(-)\dg \colon \CC\op \to
\CC$ such that $X\dg = X$ for all objects $X$ and $f\dg\!\dg = f$ for all
morphisms $f$ in $\CC$. A functor $F$ between dagger categories is a
\emph{dagger functor} if $F(f\dg) = F(f)\dg$ for all morphisms $f$.  As a model
of computation, the dagger means that there is a sound way to \emph{reverse}
programs. Note that `reversibility' in the context of reversible programming
does not mean that all programs are bijections. They admit an inverse, but it
may be partial.

\begin{example}
	The category of sets and bijective functions is a dagger category. If the
	function $f \colon X \to Y$ is bijective, its dagger is $f\inv \colon Y \to
	X$.
\end{example}

\begin{definition}
	\label{def:dg-epi-mono}
	Let $\CC$ be a dagger category. A morphism $f \colon X \to Y$ is called:
	\begin{itemize}
		\item a \emph{dagger monomorphism} if $f\dg \circ f = \iid_X$;
		\item a \emph{dagger epimorphism} if $f \circ f\dg = \iid_Y$;
		\item a \emph{dagger isomorphism} if it is both a dagger
			monomorphism and a dagger epimorphism.
	\end{itemize}
\end{definition}

\begin{remark}
	A morphism that is dagger monic (resp.~epic) is in particular split monic
	(resp.~epic), and therefore is monic (resp.~epic). However, the converse is
	not true in general.
\end{remark}

To interpret computation, a category needs more structure. We assume that we
need to form  pairs of programs, represented by a monoidal structure $\otimes$,
and conditions on programs, represented by a monoidal structure $\oplus$,
\eg~the type of booleans given by $\one \oplus \one$.

A \emph{dagger rig category} is a dagger category equipped with symmetric
monoidal structures $(\otimes, I)$ and $(\oplus, O)$, such that $\otimes$ and
$\oplus$ are dagger functors, with their coherence isomorphisms, and with
additional natural dagger isomorphisms (satisfying coherence
conditions~\cite{laplaza1972coherence}):
\[
	\begin{array}{c}
		(X \oplus Y) \otimes Z \stackrel{\thicksim}{\longrightarrow} (X \otimes Z) \oplus (Y \otimes Z),
		\quad
		O \otimes X \stackrel{\thicksim}{\longrightarrow} O,
		\\[1.5ex]
		Z \otimes (X \oplus Y) \stackrel{\thicksim}{\longrightarrow} (Z \otimes X) \oplus (Z \otimes Y),
		\quad
		X \otimes O \stackrel{\thicksim}{\longrightarrow} O.
	\end{array}
\]

\begin{example}
	\label{ex:rel}
	The category of sets and relations is a dagger rig category. If $f \colon X
	\to Y$ is a relation between two sets, then $f\dg = \{ (y,x) \mid (x,y) \in
	f \}$. This category is known to be a model of linear logic \cite[Section
	2.1]{ehrhard2012scott} and of programming languages based on linear logic.
\end{example}

\begin{example}
	\label{ex:pinj}
	We write $\PInj$ for the category of sets and partial injections.
	It is an inverse category \cite{kastl1979inverse}, which means that it is
	equipped with a dagger and some equational conditions on the dagger.  An
	example of a partial injection and its partial inverse is given in the
	introduction (\ref{eq:pinj-fun}). Inverse categories have been generalised
	to \emph{restriction categories}
	\cite{cockett2002restriction-I,cockett2003restriction-II,cockett2007restriction-III}. This family of categories is used to represent
	reversible computation, \eg~as a model~\cite{kaarsgaard2021join} of
	Rfun~\cite{yokoyama2011reversible}, or as a model~\cite{nous2021invcat,nous2024invrec} of reversible pattern matching~\cite{sabry2018symmetric}.
\end{example}

\begin{example}
	\label{ex:hilb}
	Another example is the category of Hilbert spaces and linear maps, that we
	denote $\Hilb$. Its wide subcategory whose morphisms are unitaries (or
	dagger isomorphisms) is used to interpret pure quantum operations
	\cite{sabry2018symmetric,heunen2022information,carette2023quantum,carette2024easy,me-thesis}, and its wide subcategory whose morphisms are
	isometries (or dagger monomorphisms) is a model of quantum states~\cite[{\S
	3.5}]{me-thesis}. Both are also rig dagger categories.
\end{example}

The categories $\PInj$ and $\Hilb$ are $\oplus$-semicartesian. This means that
the unit of the monoidal structure associated to $\oplus$ is a terminal object
-- and therefore a zero object, since we are working within dagger categories.
This fits the story of the models for guarded recursion where a terminal object
is necessary to define the \emph{later} functor (see
Definition~\ref{def:later}) and the natural transformation \emph{next} (see
Definition~\ref{def:next}).

\subsection{The Guarded Construction from a Dagger Rig Category}
\label{sub:rig-construction}

Fix $\CC$ a $\oplus$-semicartesian dagger rig category, \ie~the monoidal unit
$O$ of $\oplus$ is a zero object. Thus there is a unique $!_X \colon X \to O$
and a unique $\inim_X \colon O \to X$ for all objects $X$; therefore, we have
$!_X \circ \inim_X = \iid_O$ and $(\inim_X)\dg =~ !_X$. The morphism $!_X
\colon X \to O$ is thus a dagger epimorphism. We obtain the left injection
$\iota_1 \colon X \to X \oplus Y$ by:
\(
	\begin{tikzcd}[column sep=large]
		X & X \oplus O & X \oplus Y,
		\arrow["(\rho^\oplus_X)^{-1}", from=1-1, to=1-2]
		\arrow["\iid_X \oplus \inim_Y", from=1-2, to=1-3]
	\end{tikzcd}
\)
and the left injection is therefore a dagger epimorphism as well.

\begin{lemma}
	\label{lem:rig-nq}
	The category $\QQ$ is a rig category. It is in particular $\oplus$-semicartesian.
\end{lemma}

We detail the interaction between the guarded structure and the rig structure
from $\CC$.

\begin{lemma}
	\label{lem:rig-guarded}
	If $X$ and $Y$ are objects in $\QQ$, then we have, with $\star \in \{ \otimes,
	\oplus \}$:
	\[
		\begin{array}{c}
			L^\CC (X {\star} Y) \cong L^\CC X \star L^\CC Y,
			\quad
			\nu^\CC_{X \star Y} \cong \nu^\CC_X \star \nu^\CC_X.
		\end{array}
	\]
\end{lemma}

\begin{corollary}
	\label{cor:rig-guarded}
	Let $X_1$ and $X_2$ be objects in $\QQ$. We have that
	\(
		\nu^\CC_{X_1 \oplus X_2} \circ \iota_i \cong \iota_i \circ \nu^\CC_{X_i}.
	\)
\end{corollary}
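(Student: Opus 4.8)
The plan is to derive the statement directly from the naturality of $\nu^\CC$ together with Lemma~\ref{lem:rig-guarded}. Recall that $\nu^\CC \colon \iid \natto L^\CC$ is a natural transformation (Definition~\ref{def:next}); applying naturality to the morphism $\iota_i \colon X_i \to X_1 \oplus X_2$ in $\QQ$ yields the commuting square
\[
	\nu^\CC_{X_1 \oplus X_2} \circ \iota_i = L^\CC(\iota_i) \circ \nu^\CC_{X_i}.
\]
It therefore suffices to show that $L^\CC(\iota_i) \cong \iota_i$, where on the right-hand side $\iota_i$ denotes the injection $L^\CC X_i \to L^\CC X_1 \oplus L^\CC X_2$ and the isomorphism is taken with respect to the coherence isomorphism $L^\CC(X_1 \oplus X_2) \cong L^\CC X_1 \oplus L^\CC X_2$ of Lemma~\ref{lem:rig-guarded}.

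For this, I would unfold the definition of the injection given in~(\ref{eq:left-inj}), namely $\iota_1 = (\iid_{X_1} \oplus \inim_{X_2}) \circ (\rho^\oplus_{X_1})^{-1}$ (and symmetrically for $\iota_2$). The functor $L^\CC$ sends $(\rho^\oplus_{X_1})^{-1}$ to $(\rho^\oplus_{L^\CC X_1})^{-1}$ modulo the coherence isomorphisms of Lemma~\ref{lem:rig-guarded}, because the rig structure on $\QQ$ is built pointwise from that of $\CC$ while $L^\CC$ merely shifts components. Moreover $L^\CC$ preserves the zero object $O$ (Lemma~\ref{lem:guarded-terminal}, together with the fact that in the $\oplus$-semicartesian setting $O$ is also a zero object), so $L^\CC(\inim_{X_2}) = \inim_{L^\CC X_2}$ by uniqueness. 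Combining these, $L^\CC(\iota_i) \cong \iota_i$ as required, and the statement follows.

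Alternatively, one can argue componentwise: at level $0$ both sides of the claimed equation are morphisms into the zero object (using that $L^\CC(X_1 \oplus X_2)(0) = O$), hence equal by uniqueness; at level $n{+}1$ the equation reduces to $(r^{X_1}_n \oplus r^{X_2}_n) \circ \iota_i = \iota_i \circ r^{X_i}_n$, which is the naturality of the injection $\iota_i$ in $\CC$ with respect to the restriction maps. The only subtlety — and the part requiring care rather than genuine difficulty — is bookkeeping the coherence isomorphism $L^\CC(X_1 \oplus X_2) \cong L^\CC X_1 \oplus L^\CC X_2$ consistently across both levels, which is exactly the content of Lemma~\ref{lem:rig-guarded}.
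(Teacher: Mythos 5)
Your proof is correct, but it is organised differently from the paper's. The paper proves the corollary by a direct equational computation entirely on the left-hand side: it substitutes $\nu^\CC_{X_1 \oplus X_2} \cong \nu^\CC_{X_1} \oplus \nu^\CC_{X_2}$ from Lemma~\ref{lem:rig-guarded}, unfolds $\iota_1 = (\iid_{X_1} \oplus \inim_{X_2}) \circ (\rho^\oplus_{X_1})^{-1}$, and then pushes $\nu^\CC_{X_1}$ through using bifunctoriality of $\oplus$, uniqueness of maps out of the zero object, and naturality of the unitor. You instead factor the problem into two independent pieces: the naturality square of $\nu^\CC \colon \iid \natto L^\CC$ at $\iota_i$, which immediately gives $\nu^\CC_{X_1 \oplus X_2} \circ \iota_i = L^\CC(\iota_i) \circ \nu^\CC_{X_i}$, and the separate fact that $L^\CC(\iota_i) \cong \iota_i$ modulo the coherence isomorphism $L^\CC(X_1 \oplus X_2) \cong L^\CC X_1 \oplus L^\CC X_2$. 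This is a genuinely different (and arguably more conceptual) decomposition: the naturality of $\nu^\CC$ is free from Definition~\ref{def:next}, and it isolates exactly where the rig-specific content lives, namely in $L^\CC$ preserving injections up to coherence — a reusable fact, whereas the paper's calculation is bespoke to this corollary. Your justification of $L^\CC(\iota_i) \cong \iota_i$ ultimately replays the same unfolding of $\iota_1$ via the unitor and $\inim$ that the paper uses, so the underlying monoidal bookkeeping is cognate; your componentwise alternative (terminality of $O$ at level $0$, naturality of the injection in $\CC$ with respect to the restriction maps at level $n{+}1$) is also sound and is the most elementary of the three arguments. One minor citation quibble: Lemma~\ref{lem:guarded-terminal} asserts that the constant-$T$ cochain is terminal in $\QQ$, not that $L^\CC$ preserves it; the preservation you need is instead immediate from Definition~\ref{def:later}, since $L^\CC$ applied to the constant-$O$ cochain yields the constant-$O$ cochain again because $L^\CC X(0) = T = O$.
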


However, the category $\QQ$ does not handle the dagger well, as highlighted below.

\begin{remark}[Dagger in Guarded Construction]
	\label{rem:dagger}
	Morphisms in $\QQ$ are natural transformations whose components are
	morphisms in $\CC$. In that regard, the category $\QQ$ inherits some of the
	structure of $\CC$, but not all, \eg~the dagger. We however stick to the
	notation $f\dg$ for the \emph{componentwise} dagger of $f \colon X \to Y$
	in $\QQ$, even if it might not be a morphism in $\QQ$.
\end{remark}

In order to characterise which morphisms admit a dagger, we choose to work in a
less general category, defined as follows. First, we write $\CC_e$ for the wide
subcategory of $\CC$ whose morphisms are only dagger epimorphisms. We then
define the category $\NN$, whose objects are the objects of
$(\CC_e)^{\N^{\op}}$ and whose morphisms are natural transformations in $\CC$.
In other words, the category $\NN$ has cochains of dagger epimorphisms as
objects and arbitrary natural transformations as morphisms. This means that the
objects of $\NN$ are cochains of \emph{larger and larger} objects. Note that
$\NN$ is fully embedded in $\QQ$.

Similarly to $\QQ$, the category $\NN$ yields an $\Scat$-enriched category.
However, this presentation of the enrichment does not account for the dagger.
We choose to capture a more precise enrichment for $\NN$.

\begin{lemma}
	\label{lem:e-enriched}
	Due to the new restriction to dagger epimorphisms in the cochains, we
	can equivalently provide a description of the enrichment of $\NN$ in
	$\Scat$ with:
	\[
		\NN(X,Y)(n) = \{ f_n \mid f \colon X \to Y \text{ in } \CC^{\N\op} \},
		\quad
		\text{and}
		\quad
		\left\{ \begin{array}{ccl}
			\NN(X,Y)(n{+}1) & \to & \NN(X,Y)(n) \\
			f & \mapsto & r^Y_n \circ f \circ (r^X_n)\dg
		\end{array} \right.
	\]
\end{lemma}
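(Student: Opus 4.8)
The goal is to show two presentations of the $\Scat$-enrichment of $\NN$ agree. The key observation is that, because the cochain components $r^X_n \colon X(n{+}1) \to X(n)$ are now dagger epimorphisms, a natural transformation $f \colon X \to Y$ in $\CC^{\N^{\op}}$ is \emph{completely determined by any single component} $f_n$. First I would establish this determinacy: naturality gives $r^Y_n \circ f_{n+1} = f_n \circ r^X_n$ for every $n$. Since $r^X_n$ is a dagger epimorphism, $r^X_n \circ (r^X_n)\dg = \iid_{X(n)}$, so from the naturality square we recover $f_{n} = r^Y_n \circ f_{n+1} \circ (r^X_n)\dg$ --- wait, we need the \emph{other} direction too: to go from $f_n$ to $f_{n+1}$ there is no such formula in general. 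So the correct statement is only that the \emph{lower} components are determined by the higher ones, which is exactly what a truncated natural transformation $(f_0,\dots,f_n)$ already encodes, and the last component $f_n$ determines all the earlier ones $f_0,\dots,f_{n-1}$ by iterating $f_{k} = r^Y_k \circ f_{k+1} \circ (r^X_k)\dg$.

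\begin{lemma}[Restated]
	\label{lem:e-enriched-restated}
	With the restriction to dagger epimorphisms in the cochains, the enrichment of $\NN$ in $\Scat$ is equivalently described by
	\[
		\NN(X,Y)(n) = \{ f_n \mid f \colon X \to Y \text{ in } \CC^{\N\op} \},
	\]
	with restriction maps
	\[
		\begin{array}{ccl}
			\NN(X,Y)(n{+}1) & \to & \NN(X,Y)(n) \\
			f & \mapsto & r^Y_n \circ f \circ (r^X_n)\dg.
		\end{array}
	\]
\end{lemma}

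\begin{proof}[Proof sketch]
	For each $n$ define a map $\theta_n$ from the original hom-set $\NN(X,Y)(n) = \{(f_0,\dots,f_n)\}$ of truncated natural transformations to the candidate set $\{ f_n \}$ by $\theta_n(f_0,\dots,f_n) = f_n$; this is well-defined and visibly surjective. For injectivity, suppose $(f_0,\dots,f_n)$ and $(g_0,\dots,g_n)$ have $f_n = g_n$. I claim $f_k = g_k$ for all $k \le n$, by downward induction: if $f_{k+1} = g_{k+1}$, then naturality of each truncated natural transformation gives $r^Y_k \circ f_{k+1} = f_k \circ r^X_k$ and likewise for $g$; post-composing the identity $f_k = f_k \circ r^X_k \circ (r^X_k)\dg$ (valid since $r^X_k$ is a dagger epimorphism, so $r^X_k \circ (r^X_k)\dg = \iid_{X(k)}$) yields $f_k = r^Y_k \circ f_{k+1} \circ (r^X_k)\dg = r^Y_k \circ g_{k+1} \circ (r^X_k)\dg = g_k$. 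Hence $\theta_n$ is a bijection. It remains to check that these bijections are natural in $n$, i.e.\ that they intertwine the two restriction maps: the original restriction $\NN(X,Y)(n{+}1) \to \NN(X,Y)(n)$ forgets $f_{n+1}$ from $(f_0,\dots,f_{n+1})$, which under $\theta$ corresponds to sending $f_{n+1} \mapsto f_n$; by the displayed downward-induction step this is precisely $f_{n+1} \mapsto r^Y_n \circ f_{n+1} \circ (r^X_n)\dg$, matching the claimed restriction map. Finally, one observes that the identity and composition morphisms of the $\Scat$-enrichment transport along $\theta$ to the evident maps (the identity picks out $\iid_{X(n)}$ at level $n$, and composition is just composition of the $n$-components in $\CC$, which is compatible with the formula since $(r^X_n)\dg \circ r^X_n$ need not be the identity but cancels correctly against the analogous factor from the neighbouring term --- this is the routine coherence check).
\end{proof}

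\textbf{Main obstacle.} The only subtle point is the naturality of $\theta$ in $n$: one must be careful that the formula $r^Y_n \circ f \circ (r^X_n)\dg$ genuinely computes the $n$-component of the \emph{same} global natural transformation whose $(n{+}1)$-component is $f$, rather than of some other one. This follows from the downward-induction argument above, but it hinges essentially on the dagger-epimorphism hypothesis $r^X_n \circ (r^X_n)\dg = \iid$ --- without it the lower components would not be recoverable and the two presentations would not coincide. The compatibility of identities and composition under $\theta$ is then a direct, if slightly tedious, verification using the same cancellation identity, and I would relegate it to the appendix.
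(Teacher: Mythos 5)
Your proposal is correct and rests on exactly the same key identity as the paper's proof, namely that $r^X_n \circ (r^X_n)\dg = \iid_{X(n)}$ lets one recover $f_n = r^Y_n \circ f_{n{+}1} \circ (r^X_n)\dg$ from naturality; the paper merely packages this as a direct well-definedness check plus an explicit verification that $\comp$ is a morphism in $\Scat$, whereas you package it as a bijection $\theta_n$ intertwining the restriction maps. The only imprecision is in your aside on the composition check: the cleanest argument is not a cancellation of $(r^Y_n)\dg \circ r^Y_n$ (which is indeed not the identity) but the observation that both composites equal $g_n \circ f_n$ by applying the recovery formula to $f$, $g$, and $g \circ f$ --- which is precisely the computation the paper carries out.
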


\begin{remark}
	\label{rem:embedding-S}
	The embedding $E \colon \NN \embed \QQ$ is $\Scat$-enriched.
\end{remark}

We later use this category as a model for a guarded reversible programming
language (see \secref{sec:application}). To do so, the category $\NN$ needs to
preserve structure both linked to guarded recursion and the monoidal tensors.
The interaction with the dagger is detailed in the next section.

Note that the later functor $L^\CC$ (co)restricts to $\NN$ (if $X$ is a cochain
of dagger epimorphisms, $L^\CC X$ is too). We can then keep the notation $L^\CC
\colon \NN \to \NN$ non ambiguously.

\begin{lemma}
	\label{lem:e-later-enriched}
	The functor $L^\CC \colon \NN \to \NN$ is $\Scat$-enriched.
\end{lemma}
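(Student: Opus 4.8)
The plan is to deduce this from the already-established case of $\QQ$ (Lemma~\ref{lem:later-enriched}), using that $\NN$ sits inside $\QQ$ compatibly with everything in play. First I would upgrade Remark~\ref{rem:embedding-S} slightly: the embedding $E \colon \NN \embed \QQ$ is not merely $\Scat$-enriched but $\Scat$-fully faithful, i.e.\ each hom-action $E_{X,Y}\colon \NN(X,Y) \to \QQ(EX,EY)$ is an isomorphism in $\Scat$. Indeed, the objects of $\NN$ are particular objects of $\QQ$ and the morphisms of $\NN$ are, by definition, \emph{all} the natural transformations of $\CC^{\N\op}$ between them; with the truncated-tuple presentation of the enrichment the map $E_{X,Y}$ is literally the identity $\{(f_0,\dots,f_n)\}=\{(f_0,\dots,f_n)\}$, and the equivalence of Lemma~\ref{lem:e-enriched} transports this to the dagger-twisted presentation through the isomorphism $(f_0,\dots,f_n)\mapsto f_n$ (whose inverse rebuilds the tuple by the dagger formula).

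Next I would use that $L^\CC$ restricts and corestricts to $\NN$ (noted just before the statement: if $X$ is a cochain of dagger epimorphisms then so is $L^\CC X$, with the terminal object $O$ prepended by the dagger epimorphism $!_{X(0)}$). This yields an equality of ordinary functors $E \circ L^\CC = L^\CC \circ E$, the left $L^\CC$ being the one on $\NN$ and the right one that on $\QQ$. Since $L^\CC\colon\QQ\to\QQ$ is $\Scat$-enriched, the composite $L^\CC\circ E\colon\NN\to\QQ$ is $\Scat$-enriched, and transporting its hom-action back along the $\Scat$-isomorphisms $E_{-,-}$ equips $L^\CC\colon\NN\to\NN$ with an $\Scat$-enriched structure, namely
\[
	(L^\CC)_{X,Y} \;:=\; E_{L^\CC X,\, L^\CC Y}\inv \circ (L^\CC_{\QQ})_{EX,EY} \circ E_{X,Y}.
\]
That this is a morphism of $\Scat$ is immediate; that it recovers the ordinary action of $L^\CC$ and respects $\iid$ and $\comp$ follows by a routine diagram chase from the corresponding facts for $L^\CC$ on $\QQ$ (Lemma~\ref{lem:later-enriched}) together with the fact that $E$ is a $\Scat$-functor with invertible hom-components, so that $E_{-,-}$ can be cancelled from each square.

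For a reader who prefers an explicit computation, the same conclusion can be checked directly against the enrichment of $\NN$ from Lemma~\ref{lem:e-enriched}. At stage $0$ one has $L^\CC X(0) = L^\CC Y(0) = O$, so $\NN(L^\CC X, L^\CC Y)(0)$ is the singleton $\{\iid_O\}$ and the $0$-th component of $(L^\CC)_{X,Y}$ is forced (this is the only place the terminal object hypothesis is used). At stage $n{+}1$, since $L^\CC X(n{+}1) = X(n)$ and $r^{L^\CC X}_{n+1} = r^X_n$, the assignment must send $f_{n+1}$ to $(L^\CC f)_{n+1} = f_n$, and by naturality of $f$ together with $r^X_n \circ (r^X_n)\dg = \iid$ (valid because $r^X_n$ is a dagger epimorphism) one has $f_n = r^Y_n \circ f_{n+1} \circ (r^X_n)\dg$; this shows at once that the assignment is well defined on $\NN(X,Y)(n{+}1)$ and that it commutes with the dagger-twisted restriction maps of Lemma~\ref{lem:e-enriched}. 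Compatibility with identities and composition is then inherited from $L^\CC$ being an ordinary functor.

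I expect the only delicate point to be the bookkeeping between the two presentations of the $\Scat$-enrichment of $\NN$: in the tuple presentation $L^\CC$ acts on hom-objects by a plain index shift (prepend $\iid_O$, forget the last component), exactly as in the proof of Lemma~\ref{lem:later-enriched}, whereas in the dagger-twisted presentation one must first re-derive the identity $(L^\CC f)_{n+1} = r^Y_n \circ f_{n+1} \circ (r^X_n)\dg$ from naturality and dagger-epicness before the squares to be checked collapse to ones already settled for $\QQ$. No genuinely new ingredient is needed beyond Lemma~\ref{lem:later-enriched}, Lemma~\ref{lem:e-enriched}, and Remark~\ref{rem:embedding-S}.
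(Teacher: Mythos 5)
Your proof is correct. The paper does not actually include a proof of this lemma in its appendix, but your argument uses precisely the ingredients the paper sets up for it --- the full $\Scat$-enriched embedding $E \colon \NN \embed \QQ$ of Remark~\ref{rem:embedding-S}, the observation that $L^\CC$ restricts and corestricts to $\NN$, and the dagger-twisted presentation of the enrichment from Lemma~\ref{lem:e-enriched} --- and both your transport argument and your direct computation (in particular the well-definedness check $f_n = r^Y_n \circ f_{n+1} \circ (r^X_n)\dg$, which is the one genuinely new point relative to Lemma~\ref{lem:later-enriched}) are sound.
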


In the same way as $\QQ$ (see Lemma~\ref{lem:rig-nq}), the category $\NN$
inherits monoidal structures from the underlying category~$\CC$, and the
resulting monoidal structures in $\NN$ are pointwise.

\begin{lemma}
	\label{lem:rig-nq-e}
	The category $\NN$ is a rig category. It is in particular $\oplus$-cartesian.
\end{lemma}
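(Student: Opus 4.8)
The plan is to leverage Lemma~\ref{lem:rig-nq}, which already establishes that $\QQ = \CC^{\N\op}$ is a rig category with $\oplus$-cartesian structure, and to show that this structure restricts to the full subcategory $\NN \hookrightarrow \QQ$. Since $\NN$ is fully embedded in $\QQ$, it suffices to check that the rig structure on $\QQ$ — the tensors $\otimes$ and $\oplus$ computed pointwise from $\CC$, together with their units, the constant cochains $I$ and $O$, and all the structural isomorphisms (symmetry, associators, unitors, distributors, annihilators) — lives inside $\NN$. Concretely: if $X$ and $Y$ are objects of $\NN$, i.e.\ cochains whose connecting maps $r^X_n, r^Y_n$ are dagger epimorphisms in $\CC$, then $X \otimes Y$ and $X \oplus Y$ must again be such cochains, and the structural isomorphisms of $\QQ$, whose components are built from those of $\CC$, must have all components in $\CC_e$ (in fact they are dagger isomorphisms, hence dagger epimorphisms).

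First I would record that $\CC_e$, the wide subcategory of dagger epimorphisms, is closed under the monoidal tensors: if $f \circ f\dg = \iid$ and $g \circ g\dg = \iid$, then since $\otimes$ is a dagger functor we get $(f\otimes g)\circ(f\otimes g)\dg = (f\otimes g)\circ(f\dg\otimes g\dg) = (f\circ f\dg)\otimes(g\circ g\dg) = \iid \otimes \iid = \iid$, and likewise for $\oplus$. It follows that $(X\otimes Y)$ has connecting maps $r^X_n \otimes r^Y_n$ which are dagger epimorphisms, so $X\otimes Y$ is an object of $\NN$; the same argument with $\oplus$ handles $X\oplus Y$. Second, the units: the constant cochain at $I$ has connecting maps $\iid_I$, which are dagger epimorphisms, so $I \in \NN$, and similarly $O\in\NN$; moreover $O$ is still a terminal (hence zero) object in $\NN$ because the unique map $!_X\colon X\to O$ in $\CC$ assembles into a morphism of $\NN$ — one checks naturality, which holds in $\QQ$ and is inherited. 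Third, every structural isomorphism of the rig structure on $\QQ$ has, at each level $n$, the corresponding structural isomorphism of $\CC$ as its component; since in a dagger rig category these are all dagger isomorphisms, they are in particular dagger epimorphisms, so these natural transformations are morphisms of $\NN$ and remain isomorphisms there (an iso in $\QQ$ between objects of a full subcategory is an iso in that subcategory). The coherence diagrams commute in $\NN$ because they already commute in $\QQ$.

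For the $\oplus$-cartesian claim, I would simply transport the argument from Lemma~\ref{lem:rig-nq}: $O$ is terminal in $\NN$ as noted above, and since $\NN$ is a dagger category on this wide subcategory, a terminal object that is also initial is a zero object; the $\oplus$-cartesian structure then follows exactly as in $\QQ$, with projections obtained from the terminal maps $!_X$ and the distributor/annihilator isomorphisms, all of which we have just shown live in $\NN$.

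I do not expect a genuine obstacle here — the lemma is essentially a "restriction of structure" bookkeeping result. The only point requiring a moment's care is closure of $\CC_e$ under $\otimes$ and $\oplus$, which hinges on those bifunctors being \emph{dagger} functors (part of the definition of a dagger rig category); once that is in hand, everything else is inherited verbatim from $\QQ$ via the full embedding $\NN \hookrightarrow \QQ$. I would present the proof as: (i) $\CC_e$ closed under the tensors; (ii) units and structural isos land in $\NN$; (iii) coherence and $\oplus$-cartesianness inherited from Lemma~\ref{lem:rig-nq}.
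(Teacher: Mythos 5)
Your proposal is correct and takes essentially the same route as the paper, whose entire proof is the observation that dagger epimorphisms are closed under $\otimes$ and $\oplus$ (so that tensor products of objects of $\NN$ remain cochains of dagger epimorphisms), with the remaining rig structure inherited from Lemma~\ref{lem:rig-nq} via the full embedding $\NN \hookrightarrow \QQ$. You spell out the closure argument (using that $\otimes$ and $\oplus$ are dagger functors) and some morphism-level checks that are in fact automatic from fullness, but nothing is missing or wrong.
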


We then show that $\NN$ has an interaction with the dagger in $\CC$, and thus
contains more structure inherited from $\CC$. We can then point out morphisms
in $\NN$ that can be reversed.

\subsection{Guarded Dagger Category}
\label{sub:dagger-guarded-dagger}

As explained in Remark~\ref{rem:dagger}, thus the dagger notation $(-)\dg$ in
the guarded construction is loose. However, this notation can be used in some
cases (see Lemma~\ref{lem:dagger-use}). We then point out which morphisms in
$\NN$ are such that their pointwise dagger is also a morphism in $\NN$.

\begin{lemma}
	\label{lem:dagger-use}
	If $f \colon X \to Y$ is in $\NN$, then $\nu^\CC_Y \circ f \circ \left(
	\nu^\CC_X \right)\dg \colon LX \to LY$ is in $\NN$.
\end{lemma}

By Lemma~\ref{lem:later-enriched-view}, \emph{later} at the level of a hom-object
$\QQ(X,Y)$ is equivalent to \emph{later} applied to both objects $X$ and $Y$. A
similar statement holds in $\NN$ for \emph{next} due to the dagger.

\begin{lemma}
	\label{lem:next-enriched-view}
	Given $X$ and $Y$ two objects of $\NN$, we have
	$\nu^\Set_{\NN(X,Y)} = \nu^\CC_Y \circ - \circ \left( \nu^\CC_X \right)\dg$.
\end{lemma}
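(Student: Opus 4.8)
The plan is to prove the claimed equality of morphisms in $\Scat$ by checking it at every level $n$, using the explicit description of the hom-object $\NN(X,Y)$ and of its restriction maps furnished by Lemma~\ref{lem:e-enriched}, together with the definition of $\nu^\Set$ (Definition~\ref{def:next-s}) and of $\nu^\CC$, $L^\CC$ (Definitions~\ref{def:next} and~\ref{def:later}). Throughout I read both sides of the asserted identity as morphisms $\NN(X,Y)\to\NN(L^\CC X,L^\CC Y)$ in $\Scat$, via the isomorphism $L^\Set\NN(X,Y)\cong\NN(L^\CC X,L^\CC Y)$ of Lemma~\ref{lem:later-enriched-view}; on the right-hand side this is legitimate because Lemma~\ref{lem:dagger-use} guarantees that $\nu^\CC_Y\circ f\circ(\nu^\CC_X)\dg$ really is a morphism of $\NN$ for every $f$. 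Since morphisms in $\Scat$ are natural transformations, it then suffices to compare the induced functions at each level.

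At level $0$: the set $\NN(L^\CC X,L^\CC Y)(0)$ consists of the morphisms $L^\CC X(0)\to L^\CC Y(0)$, i.e.\ the morphisms $O\to O$, and since $O$ is a zero object this is the singleton $\{\iid_O\}$; hence the level-$0$ components of $\nu^\Set_{\NN(X,Y)}$ and of $\nu^\CC_Y\circ-\circ(\nu^\CC_X)\dg$ are both the unique map into it and they agree. (Concretely, the $0$-th component of $\nu^\CC_Y\circ f\circ(\nu^\CC_X)\dg$ is $!_{Y(0)}\circ f_0\circ\inim_{X(0)}=\iid_O$, matching $(L^\CC f)_0 = {!_O}$.) At level $n{+}1$: fix $f\colon X\to Y$ in $\NN$. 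The $(n{+}1)$-th component of $\nu^\Set_{\NN(X,Y)}$ is the restriction map $r^{\NN(X,Y)}_n$, which by Lemma~\ref{lem:e-enriched} sends the class of $f$ to $r^Y_n\circ f_{n+1}\circ(r^X_n)\dg$. The $(n{+}1)$-th component of $\nu^\CC_Y\circ f\circ(\nu^\CC_X)\dg$ is, by Definitions~\ref{def:next} and~\ref{def:later}, equal to $(\nu^\CC_Y)_{n+1}\circ f_{n+1}\circ((\nu^\CC_X)_{n+1})\dg = r^Y_n\circ f_{n+1}\circ(r^X_n)\dg$. These are the very same expression, so the two morphisms agree at level $n{+}1$ as well, which completes the verification. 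Moreover, since $r^X_n$ is a dagger epimorphism we have $r^X_n\circ(r^X_n)\dg=\iid_{X(n)}$, and naturality of $f$ gives $r^Y_n\circ f_{n+1}=f_n\circ r^X_n$; hence this common value also equals $f_n=(L^\CC f)_{n+1}$, which reconciles the statement with the enriched-functoriality identity $\nu^\Set_{\NN(X,Y)}=L^\CC_{X,Y}$ in the style of Lemma~\ref{lem:next-constructed}.

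The proof is essentially bookkeeping, and I expect no real obstacle. The two places where the hypotheses genuinely enter are: the use of the dagger-epimorphism presentation of the restriction maps of $\NN$ (Lemma~\ref{lem:e-enriched}), which is exactly what makes the $\Scat$-level restriction on $\NN(X,Y)$ coincide with conjugation by the components of $\nu^\CC$; and the zero-object hypothesis, which trivialises level $0$. The one point that requires care is to keep the identification of Lemma~\ref{lem:later-enriched-view} fixed on both sides, so that "$\nu^\CC_Y\circ-\circ(\nu^\CC_X)\dg$" is read as a single morphism into $\NN(L^\CC X,L^\CC Y)$ rather than as an unrelated family of functions between hom-sets of $\CC$.
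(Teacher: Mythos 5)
Your proof is correct and follows exactly the route the paper takes: its proof of this lemma is a one-line citation of Lemma~\ref{lem:e-enriched}, Lemma~\ref{lem:later-enriched-view} and Definition~\ref{def:next}, and your argument is precisely the levelwise unpacking of those three ingredients, including the careful point that the equality is read through the identification $L^\Set\NN(X,Y)\cong\NN(L^\CC X,L^\CC Y)$. No issues.
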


This also means that if $f \colon X \to Y$ is a morphism in $\NN$, then
$\nu^\Set_{\NN(X,Y), n{+}1} (f_{n{+}1}) = f_n$.

The category $\NN$ is not a dagger category, but if it were, the dagger of a
morphism in $\NN$ would necessarily be the dagger taken pointwise on the
components in $\CC$ -- the same way the inverse of an isomorphism is pointwise.

\begin{definition}[Daggerable]
	If $f \colon X \to Y$ is a morphism in $\NN$, we say that it \emph{admits a
	dagger} or is \emph{daggerable} if the family $\{ f\dg_n \}_{n \in \N}$ of
	morphisms in $\CC$ verifies:
	\[
		\begin{tikzcd}
			Y(0) & Y(1) & Y(2) & Y(3) & \cdots \\
			X(0) & X(1) & X(2) & X(3) & \cdots
			\arrow["r^Y_0"', from=1-2, to=1-1]
			\arrow["r^Y_1"', from=1-3, to=1-2]
			\arrow["r^Y_2"', from=1-4, to=1-3]
			\arrow[from=1-5, to=1-4]
			\arrow["r^X_0", from=2-2, to=2-1]
			\arrow["r^X_1", from=2-3, to=2-2]
			\arrow["r^X_2", from=2-4, to=2-3]
			\arrow[from=2-5, to=2-4]
			\arrow["f\dg_0", from=1-1, to=2-1]
			\arrow["f\dg_1", from=1-2, to=2-2]
			\arrow["f\dg_2", from=1-3, to=2-3]
			\arrow["f\dg_3", from=1-4, to=2-4]
		\end{tikzcd}
	\]
	and therefore yields a morphism in $\NN$, that we write $f\dg \colon Y \to
	X$. We write $\QQdag$ for the wide subcategory of $\NN$ whose morphisms are
	daggerable. It is by definition a rig category.
\end{definition}

\begin{example}
	Morphisms $\nu^\CC_X$ are, in general, not daggerable.
\end{example}

While it is a difficult task to exactly characterise the morphisms of $\QQdag$,
we can point out a sufficient number of them to interpret a reversible
programming language.

\begin{lemma}
	\label{lem:daggerable-isom}
	Morphisms in $\NN$ with dagger isomorphisms components are daggerable.
\end{lemma}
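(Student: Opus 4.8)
The plan is to verify directly that the family $\{f\dg_n\}_{n \in \N}$ of componentwise daggers assembles into a natural transformation $Y \to X$ in $\CC$; once this is done, it is automatically a morphism of $\NN$ (since $X$ and $Y$ already lie in $\NN$ and morphisms of $\NN$ are \emph{arbitrary} natural transformations between cochains in $\CC$), hence $f$ is daggerable.

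First I would recall that if each $f_n \colon X(n) \to Y(n)$ is a dagger isomorphism, then by Definition~\ref{def:dg-epi-mono} we have $f\dg_n \circ f_n = \iid_{X(n)}$ and $f_n \circ f\dg_n = \iid_{Y(n)}$, so $f\dg_n$ is precisely the two-sided inverse $f_n^{-1}$ of $f_n$ in $\CC$. Next I would use the naturality of $f$ itself: for every $n$ the square $r^Y_n \circ f_{n+1} = f_n \circ r^X_n \colon X(n{+}1) \to Y(n)$ commutes in $\CC$. Post-composing this identity on the right with $f_{n+1}^{-1} = f\dg_{n+1}$ and on the left with $f_n^{-1} = f\dg_n$ gives $f\dg_n \circ r^Y_n = r^X_n \circ f\dg_{n+1} \colon Y(n{+}1) \to X(n)$, which is exactly the commutativity of the $n$-th square in the naturality diagram appearing in the definition of daggerable (read with $X$ and $Y$ interchanged). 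Hence $f\dg \colon Y \to X$ is a natural transformation in $\CC$, i.e.\ a morphism of $\NN$, so $f \in \QQdag$.

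There is no genuine obstacle here: the statement is the componentwise instance of the elementary fact that inverting a commuting square of isomorphisms yields the transposed commuting square, combined with the observation that in a dagger category the dagger of a dagger isomorphism coincides with its inverse. The only point worth stating carefully is the bookkeeping of directions — the naturality diagram in the definition of daggerable runs from the functor $Y$ to the functor $X$ (the restriction maps $r^X_n$ and $r^Y_n$ are arranged as for a morphism $Y \to X$) — which is matched exactly by the inverted square above.
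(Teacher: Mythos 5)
Your proof is correct and follows essentially the same route as the paper: both arguments reduce daggerability to checking that the componentwise daggers satisfy the transposed naturality squares, and both obtain this from the original naturality square $r^Y_n \circ f_{n+1} = f_n \circ r^X_n$ using the fact that a dagger isomorphism's dagger is its two-sided inverse (the paper inserts $f_{n+1}f_{n+1}^\dagger = \iid$ and cancels $f_n^\dagger f_n = \iid$, which is exactly your composition with inverses on both sides). Your explicit remark that morphisms of $\NN$ are arbitrary natural transformations, so nothing further needs checking, is a correct and worthwhile clarification.
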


This implies, in particular, that identity morphisms are in $\QQdag$ --
although this is already implied above when we refer to $\QQdag$ as a
subcategory of $\NN$. More interestingly, coherence morphisms associated to the
rig structure are defined pointwise, and therefore all their components are
dagger isomorphisms. Thus, Lemma~\ref{lem:daggerable-isom} is sufficient to
ensure that the category $\QQdag$ is a rig category, with the structure
inherited from the underlying category $\CC$.

\begin{lemma}
	\label{lem:daggerable-rig}
	If $f \colon X_1 \to Y_1$ and $g \colon X_2 \to Y_2$ are morphisms in $\QQdag$,
	then:
	\begin{itemize}
		\item $f \otimes g \colon X_1 \otimes X_2 \to Y_1 \otimes Y_2$ and
		\item $f \oplus g \colon X_1 \oplus X_2 \to Y_1 \oplus Y_2$ are
			morphisms in $\QQdag$.
	\end{itemize}
\end{lemma}

The lemmas above show that $\QQdag$ is a dagger rig category. It is therefore a
suitable model to interpret a simply-typed reversible programming language, and
the following lemmas continue to push towards the semantics of a reversible
language as expressive as possible.

\begin{lemma}
	\label{lem:zero-dag}
	The morphisms $!_X \colon X \to O$ and $\inim_X \colon O \to X$ in $\NN$
	admit a dagger.
\end{lemma}

This ensures that injections $\iota_i \colon X_i \to X_1 \oplus X_2$, with $i
\in \{1, 2\}$, are daggerable, because they are formed as composition of
daggerable morphisms (see \secref{sub:rig-construction}).

Following Lemma~\ref{lem:dagger-use}, we can point out another important family
of daggerable morphisms.

\begin{lemma}
	\label{lem:daggerable-use}
	If $f \colon X \to Y$ in $\NN$ is daggerable, then $\nu^\CC_Y \circ f \circ
	\left( \nu^\CC_X \right)\dg \colon LX \to LY$ is also.
\end{lemma}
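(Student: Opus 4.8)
The plan is to recognise the componentwise dagger of $g = \nu^\CC_Y \circ f \circ (\nu^\CC_X)\dg$ as an instance of Lemma~\ref{lem:dagger-use} applied to $f\dg$. By Lemma~\ref{lem:dagger-use} the family $g$ is already a morphism $LX \to LY$ in $\NN$, so the only thing left to establish is that the family $\{\, g\dg_n \,\}_{n \in \N}$ of pointwise $\CC$-daggers of its components again satisfies the natural transformation diagram in $\CC$ — which is exactly the daggerability condition. I would prove this by exhibiting that family as the components of an actual morphism of $\NN$.

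First I would use that $f$ is daggerable: this means $f\dg \colon Y \to X$ is a genuine morphism of $\NN$, whose $n$-th component is the $\CC$-dagger $f_n\dg$. Then I would invoke Lemma~\ref{lem:dagger-use} on $f\dg$ — with the roles of $X$ and $Y$ interchanged — to obtain that $\nu^\CC_X \circ f\dg \circ (\nu^\CC_Y)\dg \colon LY \to LX$ is a morphism in $\NN$; its $n$-th component is $(\nu^\CC_X)_n \circ f_n\dg \circ ((\nu^\CC_Y)_n)\dg$, and in particular this family verifies the naturality square of the definition of daggerable.

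Next I would compute $g\dg_n$ directly. Since the dagger of $\CC$ is contravariant and involutive,
\[
	g\dg_n = \left( (\nu^\CC_Y)_n \circ f_n \circ ((\nu^\CC_X)_n)\dg \right)\dg = (\nu^\CC_X)_n \circ f_n\dg \circ ((\nu^\CC_Y)_n)\dg ,
\]
which is precisely the $n$-th component of the morphism $\nu^\CC_X \circ f\dg \circ (\nu^\CC_Y)\dg$ produced in the previous step. Hence $\{\, g\dg_n \,\}_n$ is the family of components of an honest morphism of $\NN$, so it verifies the required naturality square; therefore $g$ is daggerable, with $g\dg = \nu^\CC_X \circ f\dg \circ (\nu^\CC_Y)\dg$.

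There is no deep obstacle here; the subtle point — and the reason the statement is not immediate — is that one cannot simply write $g\dg$ as a composite of three morphisms of $\QQdag$, because $\nu^\CC_X$, and hence $(\nu^\CC_X)\dg$, is not daggerable in general: its pointwise dagger fails naturality, since in general $\inim_{X(0)} \circ !_{X(0)} \neq \iid_{X(0)}$. The trick is exactly that Lemma~\ref{lem:dagger-use} is phrased for the \emph{combined} expression $\nu^\CC_Y \circ (-) \circ (\nu^\CC_X)\dg$ wrapped around an $\NN$-morphism, so that applying it to $f\dg$ — rather than attempting to dagger each factor separately — is what makes the argument close.
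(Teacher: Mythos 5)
Your proof is correct and is exactly the argument the paper intends (the appendix omits a proof of this lemma precisely because it reduces to applying Lemma~\ref{lem:dagger-use} to $f\dg$ and matching components via involutivity of the dagger). Your closing observation about why one cannot dagger the three factors separately is also accurate and is the right reason the statement needs Lemma~\ref{lem:dagger-use} rather than closure of $\QQdag$ under composition.
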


The category $\QQdag$ is dagger rig and $\oplus$-semicartesian due to the
observations above. This category is also enriched in $\Scat$, which means that
the guarded fixed point operator can be used at the level of morphisms, with
morphisms that are reversible, since they admit a dagger.

We have outlined a relevant structure for guarded recursion with daggers, and
shown its link with the topos of trees $\Scat$ through an enrichment. We show
in the following section that this enrichment of $\NN$ in $\Scat$ also provides
fixed points for a certain class of functors.
These fixed points of functors are useful in the denotational semantics of
infinite data types.

\section{Application: Semantics of Guarded Symmetric Pattern matching}
\label{sec:application}

Symmetric pattern matching \cite{sabry2018symmetric} is a typed reversible
language, based on Theseus~\cite{james2014theseus}, first introduced as a
general framework for pure quantum programming -- \ie~quantum computing with
only reversible operations -- which works with infinite data types, such as
lists (which is unique in the pure quantum literature). The syntax of symmetric
pattern matching is simple and close to the type system, and we see it as the
$\lambda$-calculus for reversible programming.

There are several  refinements of the original paper \cite{sabry2018symmetric},
such as connections with linear logic and infinitary
proofs~\cite{chardonnet2023curry}, a full denotational semantics and adequacy
for its higher-order classical reversible fragment~\cite{nous2021invcat,nous2024invrec}, and a full denotational semantics and completeness result for
its first-order quantum fragment~\cite[Chapter 3]{me-thesis}. The denotational
semantics of the full higher-order quantum language is an open question (see
\cite[Section~5.2]{me-thesis}).

The required structure to be an interpretation of higher-order pattern matching
is:
\begin{enumerate}
	\item \label{rec:rig} a rig structure, interpretating tensors
		and sum types;
	\item \label{rec:para} parameterised fixed points for functors, allowing
		for guarded inductive types;
	\item \label{rec:join} a \emph{join} structure, to interpret iso
		abstractions, the first-order functions of the language;
	\item \label{rec:dag} a dagger structure, to account for the reversibility
		of first-order functions of the language;
	\item \label{rec:enri} an enrichment in a cartesian closed category,
		permitting a higher-order language with a (guarded) $\lambda$-calculus
		on top of functions;
	\item \label{rec:fix} a fixed point operator in the enrichment category, to
		interpret (guarded) recursion.
\end{enumerate}
We detail all those points below, with the corresponding syntax adapted to
guarded recursion. The full syntax and typing rules are detailed in
Figures~\ref{fig:syntax} and \ref{fig:typing}. We fix a $\oplus$-semicartesian
dagger rig category $\CC$; we write $\NN$ for the category of sequences of
dagger epimorphisms as objects and arbitrary natural transformations, and we
write $\QQdag$ for its full subcategory of morphisms that admit a dagger.

\begin{figure}
	\[
		\begin{array}{llcl}
			\text{(Ground types)} & A,B & ::= & \one \mid A \oplus B
			\mid A \otimes B \mid \later A \mid X \mid \mu X . A  \\
			\text{(Function types)} & T_1,T_2 & ::= & A \iso B \mid \later T \mid T_1 \to T_2 \\[1ex]
			\text{(Pairing)} & t,t_1,t_2 & ::= & * \mid t_1 \otimes t_2 \\
			\text{(Injections)} &&& \mid \inl t \mid \inr t \\
			\text{(Function application)} &&& \mid \omega~t \\
			\text{(Later modality)} &&& \mid \mathtt{next}~t \\
			\text{(Inductive terms)} &&& \mid \mathtt{fold}~t \\[1ex]
			\text{(Abstraction)} & \omega & ::=
			& \{ t_1 \mapsto t'_1 \mid \cdots \mid t_m \mapsto t'_m \} \\
			\text{(Higher modality)} &&& \mid \mathtt{next}~\omega \\
			\text{(Fixed points)} &&& \mid \phi \mid \fix \phi . \omega \\
			\text{(Higher functions)} &&& \mid \lambda \phi . \omega \mid \omega_2 \omega_1
		\end{array}
	\]
	\caption{Syntax of guarded symmetric pattern matching.}
	\label{fig:syntax}
\end{figure}

\begin{figure}
	\[
		\begin{array}{c}
			\infer{\Theta, X \vdash X}{}
			\quad
			\infer{\Theta \vdash \one}{}
			\quad
			\infer{\Theta \vdash \later A}{\Theta \vdash A}
			\quad
			\infer[\star {\in} \set{\oplus,\otimes}]{\Theta \vdash A \star B}{
				\Theta \vdash A
				&
				\Theta \vdash B
			}
			\quad
			\infer{\Theta \vdash \mu X . A}{\Theta,X \vdash A & X\text{ guarded in }A}
			\\[3ex]
			\infer{\entailiso A \iso B}{\vdash A & \vdash B}
			\qquad
			\infer{\entailiso \later T}{\entailiso T}
			\qquad
			\infer{\entailiso T_1 \to T_2}{\entailiso T_1 & \entailiso T_2}
			\\[3ex]
			\infer{
				\Psi ; \cdot \vdash * \colon I
			}{}
			\qquad
			\infer{
				\Psi ; \Delta_1, \Delta_2 \vdash t_1 \otimes t_2 \colon A_1 \otimes A_2
			}{
				\Psi ; \Delta_1 \vdash t_1 \colon A_1
				&
				\Psi ; \Delta_2 \vdash t_2 \colon A_2
			}
			\\[1.5ex]
			\infer{
				\Psi ; \Delta_2, \Delta_1 \vdash \letv{x \otimes y}{t_1}{t_2} \colon C
			}{
				\Psi ; \Delta_1 \vdash t_1 \colon A \otimes B
				&
				\Psi ; \Delta_2, x \colon A, y \colon B \vdash t_2 \colon C
			}
			\qquad
			\infer{
				\Psi ; \Delta \vdash \nnext t \colon {\later} A
			}{
				\Psi ; \Delta \vdash t \colon A
			}
			\\[1.5ex]
			\infer{
				\Psi ; \Delta \vdash \ini t \colon A_1 \oplus A_2
			}{
				\Psi ; \Delta \vdash t \colon A_i
			}
			\qquad
			\infer{
				\Psi ; \Delta \vdash \fold t \colon \mu X . A
			}{
				\Psi ; \Delta \vdash t \colon A[\mu X .A / X]
			}
			\\[1.5ex]
			\infer{
				\Psi ; \Delta \vdash \omega \circledcirc t \colon {\later} B
			}{
				\Psi ; \entailiso \omega \colon {\later} (A \iso B)
				&
				\Psi ; \Delta \vdash t \colon {\later} A
			}
			\qquad
			\infer{
				\Psi ; \Delta \vdash \omega~t \colon B
			}{
				\Psi ; \entailiso \omega \colon A \iso B
				&
				\Psi ; \Delta \vdash t \colon A
			}
			\\[2ex]
			\infer{
				\Psi ; \entailiso \{ t_1 \iso t'_1 \alt \cdots \alt t_n \iso t'_n \} \colon A \iso B
			}{
				\Psi ; \Delta_i \vdash t_i \colon A
				&
				\forall i \neq j, t_i~\bot~t_j
				&
				\Psi ; \Delta_i \vdash t'_i \colon B
				&
				\forall i \neq j, t'_i~\bot~t'_j
			}
			\\[1.5ex]
			\infer{
				\Psi ; \phi \colon T \entailiso \phi \colon T
			}{}
			\qquad
			\infer{
				\Psi \entailiso \nnext \omega \colon \later T
			}{
				\Psi \entailiso \omega \colon T
			}
			\qquad
			\infer{
				\Psi \entailiso \ffix \phi . \omega \colon T
			}{
				\Psi, \phi \colon \later T \entailiso \omega \colon T
			}
			\\[1.5ex]
			\infer{
				\Psi \entailiso \lambda \phi . \omega \colon T_1 \to T_2
			}{
				\Psi, \phi \colon T_1 \entailiso \omega \colon T_2
			}
			\qquad
			\infer{
				\Psi \entailiso \omega_2 \omega_1 \colon T_2
			}{
				\Psi \entailiso \omega_1 \colon T_1
				&
				\Psi \entailiso \omega_2 \colon T_1 \to T_2
			}
		\end{array}
		\]
		\caption{Typing rules of guarded symmetric pattern matching.}
		\label{fig:typing}
	\end{figure}

\subsection{Semantics of Ground Types}
\label{sub:sem-ground-types}

For the interpretation of types, we use the category $\NNe$, which is the
guarded construction arising from $\CC_e$, the subcategory of $\CC$ whose
morphisms are all dagger epimorphisms.  Note that $\NNe$ is a wide subcategory
of $\NN$, in the sense that they have the same objects.

We tackle Point~\eqref{rec:rig} in Lemma~\ref{lem:rig-nq} above. This allows
for the introduction of sum types $\oplus$ and tensor products $\otimes$.
Moreover, Theorem~\ref{th:param} ensures that locally contractive functors
admit parameterised fixed points, fitting Point~\eqref{rec:para}. Therefore,
inductive types $\mu X .  A$ can be formed, under the condition that $X$ is
guarded in $A$ (see \secref{sub:classic-syntax}). Together with the later
functor, we have a suitable model for the type introduction rules below.
\begin{equation*}
    \label{eq:type-formation}
    \begin{array}{c}
        \infer{\Theta, X \vdash X}{}
        \quad
        \infer{\Theta \vdash \one}{}
        \quad
        \infer{\Theta \vdash \later A}{\Theta \vdash A}
        \quad
		\infer[\star {\in} \set{\oplus,\otimes}]{\Theta \vdash A \star B}{
            \Theta \vdash A
            &
            \Theta \vdash B
        }
		\quad
		\infer{\Theta \vdash \mu X . A}{\Theta,X \vdash A & X\text{ guarded in }A}
    \end{array}
\end{equation*}
The interpretation of a type judgement $\Theta \vdash A$ is a functor
$\sem{\Theta \vdash A} \colon (\NNe)^{\abs\Theta} \to \NNe$. The denotation of
$\sem{\Theta \vdash \later A}$ is obtained by postcomposing $\sem{\Theta
\vdash A}$ with the later functor $L^{\CC_e} \colon \NNe \to \NNe$, and the one of
$\sem{\Theta \vdash \mu X . A}$ is $\sem{\Theta,X \vdash A}\nnoma$ (see
Theorem~\ref{th:param}).

\begin{remark}
	Components of morphisms in $\NNe$ are dagger epimorphisms; and isomorphic
	dagger epimorphism are dagger isomorphisms, the morphisms obtained by
	Theorem~\ref{th:param} then have dagger isomorphic components.
	Lemma~\ref{lem:daggerable-isom} shows that they are even morphisms in
	$\QQdag$.
\end{remark}

Terms admit only closed types.
The interpretation of a closed type $\cdot \vdash A$ is a functor $1 \to \NNe$,
therefore it is simply an object of $\NNe$ (and thus an object of $\NN$ and
$\QQdag$ too). We abuse notation and write $\sem A$ for the object of
$\NN$ given by $\sem{\cdot \vdash A}(*)$.

\begin{example}
	The interpretation of the unit type $\one$ is the object of $\NN$,
	represented in $\CC$ by:
	\(
		\begin{tikzcd}
			I & I & I & I & \cdots
			\arrow["\iid_I"', from=1-2, to=1-1]
			\arrow["\iid_I"', from=1-3, to=1-2]
			\arrow["\iid_I"', from=1-4, to=1-3]
			\arrow[from=1-5, to=1-4]
		\end{tikzcd}
	\)


	Given a closed type $A$, the type of lists of type $A$ is $\mu X . \one
	\oplus (A \otimes \later X)$. Its interpretation in $\NN$ is:
	\(
	\begin{tikzcd}
		I & I \oplus \sem A(1) & (I \oplus \sem A(2)) \oplus \sem A(2)
		^{\otimes 2} & \dots
		\arrow["\iota_1\dg"', from=1-2, to=1-1]
		\arrow["r \circ \iota_1\dg"', from=1-3, to=1-2]
		\arrow["r \circ \iota_1\dg"', from=1-4, to=1-3]
	\end{tikzcd}
	\)
\end{example}

\subsection{Semantics of Ground Terms}

The primary terms of symmetric pattern matching are as expected. Typing
judgements for terms have the form $\Delta \vdash t \colon A$, where $\Delta =
x_1 \colon A_1, \dots, x_m \colon A_m$ is a context, and $A$ is a closed type.
The semantics of $\Delta$ is the tensor product of the semantics of all its
components $A_i$.
\[
	\begin{array}{c}
		\infer{
			\cdot \vdash * \colon I
		}{}
		\qquad
		\infer{
			\Delta_1, \Delta_2 \vdash t_1 \otimes t_2 \colon A_1 \otimes A_2
		}{
			\Delta_1 \vdash t_1 \colon A_1
			&
			\Delta_2 \vdash t_2 \colon A_2
		}
		\qquad
		\infer{
			\Delta_2, \Delta_1 \vdash \letv{x \otimes y}{t_1}{t_2} \colon C
		}{
			\Delta_1 \vdash t_1 \colon A \otimes B
			&
			\Delta_2, x \colon A, y \colon B \vdash t_2 \colon C
		}
		\\[1.5ex]
		\infer{
			\Delta \vdash \ini t \colon A_1 \oplus A_2
		}{
			\Delta \vdash t \colon A_i
		}
		\qquad
		\infer{
			\Delta \vdash \omega~t \colon B
		}{
			\entailiso \omega \colon A \iso B
			&
			\Delta \vdash t \colon A
		}
		\qquad
		\infer{
			\Delta \vdash \nnext t \colon {\later} A
		}{
			\Delta \vdash t \colon A
		}
		\\[1.5ex]
		\infer{
			\Delta \vdash \omega \circledcirc t \colon {\later} B
		}{
			\entailiso \omega \colon {\later} (A \iso B)
			&
			\Delta \vdash t \colon {\later} A
		}
		\qquad
		\infer{
			\Delta \vdash \fold t \colon \mu X . A
		}{
			\Delta \vdash t \colon A[\mu X .A / X]
		}
	\end{array}
\]

\begin{example}
	If we have $\Delta \vdash t \colon [A]$ and $\Delta' \vdash h \colon A$,
	the list with head $h$ and tail $t$ is obtained in the syntax by $\fold{}
	\inr{} (h \otimes \nnext t)$.
\end{example}

The interpretation of a type judgement $\Delta \vdash t \colon A$ is a morphism
$\sem{\Delta \vdash t \colon A} \colon \sem\Delta \to \sem A$ in $\NN$.
%
The semantics of injections $\ini\!$ (resp. $\nnext\!$, $\fold\!$) is obtained
by postcomposing with $\iota_i \colon \sem{A_i} \to \sem{A_1} \oplus \sem{A_2}$
(resp. $\nu^\CC_{\sem A} \colon \sem A \to L^\CC \sem A$, $\phi^{\sem{X \vdash
A}}_{\sem{\mu X . A}} \colon \sem{A[\mu X . A / X]} \to \sem{\mu X . A}$). The
interpretation for tensoring two terms is simply the tensor of the semantics of
the terms. The semantics of the $\mathtt{let}$, destructor of the tensor
product, is given by $\sem{t_2} \circ (\iid_{\sem{\Delta_2}} \otimes
\sem{t_1})$.

Note that among the morphisms mentioned above, only $\nu^\CC$ is not
daggerable.

\subsection{Semantics of First-Order Functions}
\label{sub:fo-fun}

The functions in symmetric pattern matching are called \emph{isos} -- short for
`isomorphisms', even though they are not necessarily isomorphic in some
extensions of the language -- and are formed as the \emph{join} of several
reversible partial functions.
\[
	\infer{
		\entailiso \{ t_1 \iso t'_1 \alt \cdots \alt t_n \iso t'_n \} \colon A \iso B
	}{
		\Delta_i \vdash t_i \colon A
		&
		\forall i \neq j, t_i~\bot~t_j
		&
		\Delta_i \vdash t'_i \colon B
		&
		\forall i \neq j, t'_i~\bot~t'_j
	}
\]
The most important premise in the typing rule above is pairwise
orthogonality~\cite[Section 2.2]{sabry2018symmetric}, but the detail is not
relevant in this paper. The type of iso abstractions takes its semantics as
hom-objects $\QQdag(\sem A, \sem B)$ -- which is an object in $\Scat$ -- since
we want those functions to be reversible and their semantics to have a dagger.
The interpretation of each subfunction forming an iso is given by:
\(
	\sem{t_i \iso t'_i} = \sem{\Delta \vdash t'_i \colon B} \circ \sem{\Delta
	\vdash t_i \colon A}\dg.
\)
As observed above in Remark~\ref{rem:dagger}, the latter is not necessarily
well-defined. To overcome this issue, we introduce a symmetric binary relation
-- that we write $\bowtie$, say `have same depth as' -- in order to ensure that
an iso is well-defined.
First we introduce \emph{next-free} contexts, with $\nnext{}\! \notin t$, as:
\[
	\begin{array}{lcl}
		C[-] & ::= & -
		\mid  \ini C[-] \mid C[-] \otimes t \mid t \otimes C[-] \mid \fold C[-] \mid
		\omega~C[-] \\
		&& \mid \letv{x \otimes y}{C[-]}{t} \mid \letv{x \otimes y}{t}{C[-]}
	\end{array}
\]

\begin{definition}
	Given two terms $t,t'$, we have $t \bowtie t'$ if it can be derived with
	the rules below. When $t \bowtie t'$ can be derived, we say that $t$ and
	$t'$ have the \emph{same depth}. The relation $\bowtie$ is defined as the
	smallest symmetric relation on terms such that $* \bowtie *$ and:
	\[
		\begin{array}{c}
			\infer{
				C[t] \bowtie C'[t']
			}{
				t \bowtie t'
			}
			\quad
			\infer{
				\nnext t \bowtie \nnext t'
			}{
				t \bowtie t'
			}
			\quad
			\infer{
				\nnext{} \ini t \bowtie \ini{} \nnext t'
			}{
				t \bowtie t'
			}
			\quad
			\infer{
				\nnext{t_1 \otimes t_2} \bowtie \nnext t'_1 \otimes \nnext t'_2
			}{
				t_1 \bowtie t'_1
				&
				t_2 \bowtie t'_2
			}
		\end{array}
	\]
	with $C[-]$ and $C'[-]$ potentially different next-free contexts.
\end{definition}

\begin{lemma}
	\label{lem:defined-iso}
	If $\Delta \vdash t \colon A$, $\Delta \vdash t' \colon B$ and $t \bowtie
	t'$, then $\sem{t_i \iso t'_i}$ is in $\QQdag(\sem A,\sem B)$.
\end{lemma}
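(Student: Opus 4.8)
The plan is to prove, by induction on the derivation of $t \bowtie t'$, that the pointwise composite $\sem{t'} \circ \sem{t}\dg$ is a genuine natural transformation — hence a morphism of $\NN$ — and moreover is daggerable, i.e.\ a morphism of the dagger category $\QQdag$; since $\QQdag(\sem A, \sem B)$ is by definition the $\Scat$-hom-object of $\QQdag$ from $\sem A$ to $\sem B$, this is precisely the claim. (The two terms in a $\bowtie$-derivation are always taken with a common context, as the iso formation rule forces; this is preserved by the rules once the non-hole parts of the two contexts are chosen compatibly, and I will not dwell on this bookkeeping.) Two preliminary observations are needed. First, for every $\nnext$-free well-typed term $s$, the morphism $\sem s$ is daggerable: by induction on $s$, using that $\sem *$ is a component of an identity, the injections $\iota_i$ (daggerable, as shown right after Lemma~\ref{lem:zero-dag}), the $\fold$-isomorphisms $\phi$ (whose components are dagger isomorphisms, hence daggerable by Lemma~\ref{lem:daggerable-isom}), the rig coherence isomorphisms (likewise), and the semantics of isos occurring in $s$ (previously defined, valued in $\QQdag(-,-)$) are all daggerable, while $\QQdag$ is closed under $\circ$, $\otimes$ and $\oplus$ (Lemma~\ref{lem:daggerable-rig}). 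Second, if $g$ is daggerable then so is $g\dg$, its dagger being $g$; consequently any composite (and any $\otimes$ or $\oplus$) of daggerable morphisms and daggers of daggerable morphisms is daggerable. Throughout I will use freely that the pointwise dagger is a contravariant functor and that $\otimes, \oplus$ act componentwise and are componentwise dagger functors on $\CC$, so identities such as $(f \otimes g)\dg = f\dg \otimes g\dg$ and $(h \circ f)\dg = f\dg \circ h\dg$ hold at the level of components regardless of daggerability.

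For the base case $\nnext{}\! \notin t, t'$, both $\sem t$ and $\sem{t'}$ are daggerable by the first observation, so $\sem{t'} \circ \sem{t}\dg$ is a composite of daggerable morphisms, hence in $\QQdag$. For the context rule $C[\tau] \bowtie C'[\tau']$ obtained from $\tau \bowtie \tau'$, I would use that, since the contexts $C[-], C'[-]$ are $\nnext$-free, $\sem{C[\tau]}$ arises from $\sem\tau$ by postcomposition with a daggerable morphism (cases $\ini{}-$, $\fold{}-$, $\omega~-$), by precomposition with a daggerable morphism after tensoring with an identity (cases $\letv{x \otimes y}{-}{s}$ and $\letv{x \otimes y}{s}{-}$), or by tensoring with $\sem s$ for $\nnext$-free $s$ (cases $-\otimes s$, $s \otimes -$), and similarly for $\sem{C'[\tau']}$. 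Using the componentwise identities above, $\sem{C'[\tau']} \circ \sem{C[\tau]}\dg$ rewrites as a composite of $\otimes$'s and $\oplus$'s of: the induction-hypothesis morphism $\sem{\tau'} \circ \sem{\tau}\dg \in \QQdag$; morphisms $\sem{s'} \circ \sem{s}\dg \in \QQdag$ for $\nnext$-free $s, s'$ (daggerable by the two observations); identities; and daggerable structural morphisms. Closure of $\QQdag$ under $\circ$, $\otimes$, $\oplus$ then concludes the case.

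The substantive cases are the three rules that introduce $\nnext{}$. For $\nnext\tau \bowtie \nnext\tau'$, with $\tau \colon A$ and $\tau' \colon B$, one has $\sem{\nnext\tau} = \nu^\CC_{\sem A} \circ \sem\tau$ and $\sem{\nnext\tau'} = \nu^\CC_{\sem B} \circ \sem{\tau'}$, so $\sem{\nnext\tau'} \circ \sem{\nnext\tau}\dg = \nu^\CC_{\sem B} \circ \bigl(\sem{\tau'} \circ \sem{\tau}\dg\bigr) \circ \bigl(\nu^\CC_{\sem A}\bigr)\dg$; by the induction hypothesis $\sem{\tau'} \circ \sem{\tau}\dg$ is a daggerable morphism of $\NN$, so Lemma~\ref{lem:dagger-use} gives that the whole expression is a morphism of $\NN$ and Lemma~\ref{lem:daggerable-use} gives that it is daggerable. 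For $\nnext{}\ini\tau \bowtie \ini{}\nnext\tau'$ I would first invoke Corollary~\ref{cor:rig-guarded} (together with Lemma~\ref{lem:rig-guarded} to identify $L^\CC(X_1 \oplus X_2) \cong L^\CC X_1 \oplus L^\CC X_2$ through a daggerable coherence isomorphism) to push $\nu^\CC$ past the injection, rewriting $\sem{\nnext{}\ini\tau}$, up to a daggerable isomorphism, as $\iota_i \circ \nu^\CC_{\sem{A'}} \circ \sem\tau$ and $\sem{\ini{}\nnext\tau'}$ as $\iota_i \circ \nu^\CC_{\sem{B'}} \circ \sem{\tau'}$; the composite then factors, up to daggerable injections and coherence isomorphisms, through $\nu^\CC_{\sem{B'}} \circ \bigl(\sem{\tau'} \circ \sem{\tau}\dg\bigr) \circ \bigl(\nu^\CC_{\sem{A'}}\bigr)\dg$, handled exactly as in the previous case. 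For $\nnext(\tau_1 \otimes \tau_2) \bowtie \nnext\tau'_1 \otimes \nnext\tau'_2$ I would use the $\otimes$-parts of Lemma~\ref{lem:rig-guarded}, namely $\nu^\CC_{X \otimes Y} \cong \nu^\CC_X \otimes \nu^\CC_Y$ and $L^\CC(X \otimes Y) \cong L^\CC X \otimes L^\CC Y$, to rewrite $\sem{\nnext(\tau_1 \otimes \tau_2)}$, up to a daggerable coherence isomorphism $\chi$, as $(\nu^\CC_{\sem{A_1}} \circ \sem{\tau_1}) \otimes (\nu^\CC_{\sem{A_2}} \circ \sem{\tau_2})$, while $\sem{\nnext\tau'_1 \otimes \nnext\tau'_2} = (\nu^\CC_{\sem{B_1}} \circ \sem{\tau'_1}) \otimes (\nu^\CC_{\sem{B_2}} \circ \sem{\tau'_2})$; by componentwise functoriality the composite equals $(h_1 \otimes h_2) \circ \chi$ with $h_i = \nu^\CC_{\sem{B_i}} \circ \bigl(\sem{\tau'_i} \circ \sem{\tau_i}\dg\bigr) \circ \bigl(\nu^\CC_{\sem{A_i}}\bigr)\dg$, each $h_i$ lying in $\QQdag$ by the induction hypothesis and Lemmas~\ref{lem:dagger-use} and \ref{lem:daggerable-use}, so $h_1 \otimes h_2 \in \QQdag$ by Lemma~\ref{lem:daggerable-rig}, and the composite with the daggerable $\chi$ remains in $\QQdag$.

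I expect the main obstacle to be precisely these mixed rules: there $\sem t$ and $\sem{t'}$ have different syntactic shapes, so the $\nu^\CC$-sandwich $\nu^\CC \circ (-) \circ (\nu^\CC)\dg$ of Lemmas~\ref{lem:dagger-use} and \ref{lem:daggerable-use} is not literally visible until $\nu^\CC$ has been commuted past the relevant injection or tensor — which is exactly what Lemma~\ref{lem:rig-guarded} and Corollary~\ref{cor:rig-guarded} are for. Once the $\nu^\CC$'s are aligned, the rest is a bookkeeping exercise in the componentwise functoriality of $\otimes$, $\oplus$ and $(-)\dg$ and the closure of $\QQdag$ under $\circ$, $\otimes$, $\oplus$, with no further categorical input; the remaining administrative point is checking that the free-variable contexts on the two sides agree, which the typing discipline of iso abstractions guarantees.
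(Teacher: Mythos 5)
Your proof takes essentially the same route as the paper's: induction on the derivation of $\bowtie$, a preliminary observation that $\nnext$-free terms have daggerable semantics (via Lemmas~\ref{lem:daggerable-isom}, \ref{lem:daggerable-rig} and~\ref{lem:zero-dag}), and the three $\nnext$-rules handled by aligning the $\nu^\CC$'s (Lemma~\ref{lem:rig-guarded}, Corollary~\ref{cor:rig-guarded}) and then applying the sandwich Lemmas~\ref{lem:dagger-use} and~\ref{lem:daggerable-use}. The differences are only bookkeeping --- e.g.\ you commute $\nu^\CC$ past $\iota_i$ where the paper absorbs $\iota_i\dg$ into the inner daggerable morphism, and your handling of the context rule is if anything slightly more careful about using only the daggerability of the composite $\sem{\tau'}\circ\sem{\tau}\dg$ rather than of the pieces individually.
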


We then assume that there is a \emph{join} structure in $\CC$, similar to the
one in join inverse categories~\cite{axelsen2016join}, or similar to the one
used for contractions between Hilbert spaces~\cite[Lemma~3.37]{me-thesis}, or
such as the summability for models of linear logic~\cite[Definition
4.5]{ehrhard2024summability}. That is to say, if two parallel morphisms $f,g
\colon X \to Y$ are so-called \emph{compatible} -- the definition does not
matter here --, the join $f \vee g \colon X \to Y$ is also a morphism in $\CC$.
This join should be distributive with composition, \ie~we have $h\circ (f \vee
g) = hf \vee hg$ and $(f \vee g) \circ h' = fh' \vee gh'$, and it should also
be compatible with the dagger, \ie~we have $(f \vee g)\dg = f\dg \vee g\dg$.
Given these conditions, this join structure generalises to $\QQdag$: say
that two parallel morphisms $f,g \colon X \to Y$ in $\QQdag$ are compatible if
their components are pointwise compatible.  Naturality is ensured since the
join distributes with composition.  Condition~\eqref{rec:join} is thus satisfied.
The semantics of an iso abstraction is therefore defined in $\QQdag$ (thus
fitting Point~\eqref{rec:dag}) as:
\[
	\sem{\{ t_1 \iso t'_1 \alt \cdots \alt t_n \iso t'_n \} \colon A \iso B}
	=
	\bigvee_i \sem{t_i \iso t'_i} =
	\bigvee_i \sem{\Delta \vdash t'_i \colon B} \circ \sem{\Delta \vdash t_i
	\colon A}\dg.
\]

\begin{example}[Flip the first Boolean]
	Consider the following iso of type $[\one \oplus \one] \iso [\one \oplus
	\one]$ that maps a list of Booleans to a list of Booleans where the first
	element is flipped:
	\[
		\left\{\begin{array}{lcl}
			[~] & \iso &[~] \\
			(\inl *) :: \nnext t & \iso & (\inr *) :: \nnext t \\
			(\inr *) :: \nnext t & \iso & (\inl *) :: \nnext t
		\end{array}\right\}
	\]
\end{example}

\subsection{Semantics of Higher-Order Types}

Due to the enrichment of our categories (namely, $\NNe$, $\NN$ and $\QQdag$) in
$\Scat$, we add a simply-typed guarded $\lambda$-calculus on top of the
functions, in our syntax.
Until now, the only function type was $A \iso B$
given two closed term types $A$ and $B$. We then allow for \emph{functions of
functions}, with the following type grammar:
\(
	T~~::=~~A \iso B \alt \later T \alt T_1 \to T_2.
\)
These functions types take their semantics as objects in $\Scat$. The semantics
of $A \iso B$ is $\QQdag ( \sem A, \sem B )$. The semantics of $T_1 \to T_2$ is
$[\sem{T_1} \to \sem{T_2}]$, the exponential object in $\Scat$.

\subsection{Semantics of Higher-Order Terms and Functions}

We add \emph{function contexts} $\Psi = \phi_1 \colon T_1, \dots, \phi_k \colon
T_k$; now well-typed terms and functions can depend on a function context, with
term judgements such as $\Psi; \Delta \vdash t \colon A$ and function
judgements as $\Psi \entailiso \omega \colon T$.  An iso abstraction can be
applied to a function, thus $\Psi; \Delta \vdash \omega~t \colon B$ is
well-formed whenever $\Psi \entailiso \omega \colon A \iso B$ and $\Psi; \Delta
\vdash t \colon A$ are. The denotational semantics of term judgement is given
as a morphism in $\Scat$:
\(
	\sem{\Psi; \Delta \vdash t \colon A} \colon
	\sem\Psi \to \NN(\sem\Delta, \sem A).
\)
In addition, the grammar of functions is extended to a simply-typed $\lambda$-calculus.
\[
	\begin{array}{lcl}
		\omega & ::= & \{ t_1 \iso t'_1 \alt \cdots \alt t_n \iso t'_n \}
		\alt \phi
		\alt \lambda \phi . \omega \alt \omega_2 \omega_1 \alt \ffix \phi . \omega
		\alt \nnext \omega \alt \omega_2 \circledcirc \omega_1
	\end{array}
\]
The semantics for a function judgement $\Psi \entailiso \omega \colon T$ is a
morphism in $\Scat$ with type $\sem\Psi \to \sem T$. Therefore the semantics of
the new terms are the usual ones for a simply-typed $\lambda$-calculus in a
cartesian closed category, as pointed out in \eqref{rec:enri}.

The structure of $\Scat$ allows for the addition of several other kinds of
functions terms, inherited from the guarded lambda calculus: a delay operation
$\nnext \omega$, whose semantics is $\nu^\Set$ and a composition of delayed
operations $\omega_2 \circledcirc \omega_1$. A delayed iso $\omega \colon
\later (A \iso B)$ can also be applied to a delayed term $t \colon \later
A$ to obtain $\omega \circledcirc t \colon \later B$.

As shown in \secref{sub:topos}, there is a fixed point operator linked to the
guarded structure in $\Scat$. We then add to the syntax of functions a fixed
point operator $\ffix\!$, as desired in Point~\eqref{rec:fix}:
\begin{equation}
	\label{eq:type-fix}
	\infer{
		\Psi \entailiso \ffix \phi . \omega \colon T
	}{
		\Psi, \phi \colon \later T \entailiso \omega \colon T
	}
\end{equation}
The morphism $\sem{\Psi, \phi \colon \later T \entailiso \omega \colon T}$ is
contractive on its last variable, and therefore admits a guarded fixed
point~\cite[Theorem~2.4]{birkedal2012first}, used as the semantics for the term
$\Psi \entailiso \ffix \phi . \omega \colon T$.

\begin{example}
	Given $A$ and $B$ two types, we have now a proper semantics for the function
	$\mathtt{map}$ which applies a function $A \iso B$ to all the element of a
	list $[A]$.
	\[
		\mathtt{map} = \ffix \phi^{\later ((A \iso B) \to ([A] \iso [B]))} .
		\lambda \psi^{A \iso B} .
		\left\{ \begin{array}{lcl}
			[~] & \iso & [~] \\
			h :: t & \iso & (\psi~h) :: ((\phi \circledcirc \nnext \psi) \circledcirc t)
		\end{array}\right\}
	\]
\end{example}

\subsection{The Pure Quantum Case}

As mentioned above, the semantics of symmetric
pattern matching~\cite{sabry2018symmetric} with inductive types and
higher-order quantum operations is an open question. There are multiple factors
that make this question hard, \eg~categories of Hilbert spaces are unlikely to
interpret recursion (see \cite[Section~5.2]{me-thesis}). Guarded recursion
helps bypass those issues.

\begin{example}[Quantum control]
	In pure quantum computing, the type of quantum bits (or \emph{qubits}) is
	given by $\qubit \defeq \one \oplus \one$. The term $\inl * \colon \qubit$
	represents the qubit $\ket 0$, and $\inr *$, the qubit $\ket 1$. The
	high-order formalism of symmetric pattern matching allows for a general
	control operation $\mathtt{qctrl} \colon (A \iso B) \to (A \iso B) \to
	(\qubit \otimes A \iso \qubit \otimes B)$, that we refer to as
	\emph{quantum control} or \emph{quantum if}:
	\[
		\mathtt{qctrl} = \lambda \phi^{A \iso B} . \lambda \psi^{A \iso B} .
		\left\{ \begin{array}{lcl}
			\ket 0 \otimes y & \iso & \ket 0 \otimes (\phi~y) \\
			\ket 1 \otimes y & \iso & \ket 1 \otimes (\psi~y)
		\end{array} \right\}
	\]
	It is a \emph{quantum if}, because it applies either $\phi$ or $\psi$ to
	the second qubit depending on the value of the first qubit, without
	measuring it.
\end{example}

\begin{example}[General QFT]
	The \emph{Quantum Fourier Transform} is a subroutine that plays a central
	role in quantum algorithms. In the current literature, a quantum algorithm
	is often stated in the form of quantum circuits on a fixed number of
	qubits. We can here abstract away from this low-level description, and
	offer more appealing view for programming languages, by defining a quantum
	Fourier transform subroutine for lists of qubits.

	First, given an iso $\omega \colon A \iso A$, we define $\omega \circ
	\omega$ as the iso abstraction $\{ x \iso \omega~(\omega~x) \} \colon A
	\iso A$. We write $\mathtt 2$ for the type of qubits $\one \oplus \one$ for
	space reasons. We can then write the iso that applies gradual rotations to
	a list of qubits; here, it eventually applies $\psi^k$ to the
	$k^{\text{th}}$ element of the list (with $T = \later ((\two \iso \two) \to
	(\two \iso \two) \to ([\two] \iso [\two]))$):
	\[
		\mathtt{Rot} = \ffix \phi^T .
		\lambda \psi^{\two \iso \two} . \lambda \psi'^{\two \iso \two} .
		\left\{ \begin{array}{lcl}
			[~] & \iso & [~] \\
			h :: t & \iso & (\psi'~h) :: (\phi \circledcirc \nnext \psi
			\circledcirc \nnext (\psi \circ \psi')) \circledcirc t)
		\end{array}\right\}
	\]
	then, given $\mathtt{had} \colon \two \iso \two$ the Hadamard gate,
	we have (with $T' = \later ((\two \iso \two) \to ([\two] \iso [\two]))$):
	\[
		\scalebox{0.9}{
			$
		\mathtt{QFT} = \ffix \phi^{\later ((\two \iso \two) \to ([\two] \iso [\two]))} .
		\lambda \psi^{\two \iso \two} .
		\left\{ \begin{array}{lcl}
			[~] & \iso & [~] \\
			h :: \nnext t & \iso & \letv{h' \otimes t'}{\mathtt{qctrl}(\mathtt{Rot}(\psi))~((\mathtt{had}~h) \otimes t)}{\!} \\
			&& h' :: ((\phi \circledcirc \nnext \psi) \circledcirc \nnext t')
		\end{array}\right\}
			$
		}
	\]
	This offers a higher-order perspective on reversible quantum programming.
\end{example}

\subsection{What about streams?}

The story of reversible guarded recursion is very different to the classical
one. For example, streams are usually computed as a fixed point of the functor
$X \otimes T$, obtained as the limit of the following diagram, assuming $T$ is
a terminal object.
\[
	\begin{tikzcd}
		T & X \otimes T & X \otimes X \otimes T & \cdots
		\arrow["!"', from=1-2, to=1-1]
		\arrow["X \otimes !"', from=1-3, to=1-2]
		\arrow[from=1-4, to=1-3]
	\end{tikzcd}
\]
\begin{lemma}
	Given a dagger category $\CC$, if $T$ is a terminal object in $\CC$, then
	it is also initial.
\end{lemma}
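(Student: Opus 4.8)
The plan is to dualise the terminality of $T$ by applying the dagger. Recall that $(-)\dg \colon \CC\op \to \CC$ is the identity on objects, so $T\dg = T$, and that it is involutive, so $(f\dg)\dg = f$ for every morphism $f$.

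First I would establish existence of a morphism $T \to A$ for each object $A$. Since $T$ is terminal there is a morphism $!_A \colon A \to T$, and applying the dagger yields $(!_A)\dg \colon T \to A$; this is the candidate for the unique morphism out of $T$.

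For uniqueness, suppose $f, g \colon T \to A$ are two morphisms. Then $f\dg, g\dg \colon A \to T$, and terminality of $T$ forces $f\dg = \, !_A \, = g\dg$. Applying the dagger again and using involutivity gives $f = (f\dg)\dg = (g\dg)\dg = g$. Hence there is exactly one morphism $T \to A$, so $T$ is initial; being both terminal and initial, it is in fact a zero object in $\CC$.

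I do not expect any real obstacle here: the only point to keep in mind is that the dagger fixes objects, which is precisely what lets $T$ serve as both the source and the target of the relevant universal arrows. The same reasoning shows, more generally, that in a dagger category a ``unique arrow into $X$'' property dualises to a ``unique arrow out of $X$'' property.
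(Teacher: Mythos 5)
Your proof is correct and is exactly the standard argument the paper implicitly relies on (it states this lemma without proof): since the dagger fixes objects and is an involution, it gives a bijection $\CC(T,A) \cong \CC(A,T)$, and the latter is a singleton by terminality. Nothing is missing.
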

Therefore, in a dagger rig category, a terminal object is also a zero object,
and the diagram above falls down to a cosequence of zero objects. whose limit
is also the zero object.
This does not necessarily imply that streams cannot be manipulated reversibly;
but the semantics of streams certainly cannot be computed in a usual way
in a dagger category. Working with reversible streams then requires a new
theory.

%
%
%

\section*{Acknowledgments}

I would like to thank Benoît Valiron and Vladimir Zamdzhiev for their support
and helpful comments during my thesis; and thanks to Robert Booth, Titouan
Carette, Kostia Chardonnet, Pierre Clairambault, Jonas Frei, Laurent Regnier,
and Morgan Rogers for our chats about this topic. I am also grateful to Chris
Heunen, Robin Kaarsgaard and Kim Worrall, and more generally the Quantum
Programming Group in the University of Edinburgh for their support and
feedback.

\bibliographystyle{./entics}
\bibliography{ref}

\end{document}